\newcommand{\hpcayear}{2025}
\newcommand{\hpcasubmissionnumber}{590}
\title{\papername: \underline{C}ommute \underline{H}amiltonian-based \underline{Q}A\underline{O}A \\ for \underline{C}onstrained Binary \underline{O}ptimization}
\newcounter{KPindex}
\newcounter{GCindex}
\newcounter{FLindex}
\newcommand{\papername}{Choco-Q\xspace}
\newcommand{\blackxmark}{\textcolor{black}{\ding{55}}}%
\newtheorem{lemma}{Lemma}
\def\hpcacameraready{} 
\newcommand\hpcaauthors{Debin Xiang\IEEEauthorrefmark{2} and Qifan Jiang\thanks{\IEEEauthorrefmark{2} These authors contributed equally to this work.}\IEEEauthorrefmark{2} and Liqiang Lu\IEEEauthorrefmark{1}\thanks{\IEEEauthorrefmark{1} Corresponding authors.} and Siwei Tan and Jianwei Yin}
\newcommand\hpcaaffiliation{Zhejiang University}
\newcommand\hpcaemail{\{db.xiang, jiang7f, liqianglu, siweitan\}@zju.edu.cn, zjuyjw@cs.zju.edu.cn}
\author{
  \ifdefined\hpcacameraready
    \IEEEauthorblockN{\hpcaauthors{}}
      \IEEEauthorblockA{
        \hpcaaffiliation{} \\
        \hpcaemail{}
      }
  \else
    \IEEEauthorblockN{\normalsize{HPCA \hpcayear{} Submission
      \textbf{\#\hpcasubmissionnumber{}}} \\
      \IEEEauthorblockA{
        Confidential Draft \\
        Do NOT Distribute!!
      }
    }
  \fi 
}
\definecolor{folderbg}{RGB}{124,166,198}
\definecolor{folderborder}{RGB}{110,144,169}
\def\Size{4pt}
\tikzset{
  folder/.pic={
    \filldraw[draw=folderborder,top color=folderbg!50,bottom color=folderbg]
      (-1.05*\Size,0.2\Size+5pt) rectangle ++(.75*\Size,-0.2\Size-5pt);  
    \filldraw[draw=folderborder,top color=folderbg!50,bottom color=folderbg]
      (-1.15*\Size,-\Size) rectangle (1.15*\Size,\Size);
  }
}
\begin{document}
\maketitle

\ifdefined\hpcacameraready 
  \thispagestyle{camerareadyfirstpage}
  \pagestyle{empty}
\else
  \thispagestyle{plain}
  \pagestyle{plain}
\fi

\newcommand{\hpcaheight}{0mm}
\ifdefined\eaopen
\renewcommand{\hpcaheight}{12mm}
\fi


\begin{abstract}
Constrained binary optimization aims to find an optimal assignment to minimize or maximize the objective meanwhile satisfying the constraints, which is a representative NP problem in various domains, including transportation, scheduling, and economy. Quantum approximate optimization algorithms (QAOA) provide a promising methodology for solving this problem by exploiting the parallelism of quantum entanglement. However, existing QAOA approaches based on penalty-term or Hamiltonian simulation fail to thoroughly encode the constraints, leading to extremely low success rate and long searching latency. 


This paper proposes \papername, a formal and universal framework for constrained binary optimization problems, which comprehensively covers all constraints and exhibits high deployability for current quantum devices. The main innovation of \papername is to embed the commute Hamiltonian as the driver Hamiltonian, resulting in a much more general encoding formulation that can deal with arbitrary linear constraints. Leveraging the arithmetic features of commute Hamiltonian, we propose three optimization techniques to squeeze the overall circuit complexity, including Hamiltonian serialization, equivalent decomposition, and variable elimination. The serialization mechanism transforms the original Hamiltonian into smaller ones. Our decomposition methods only take linear time complexity, achieving end-to-end acceleration. Experiments demonstrate that \papername shows more than 235$\times$ algorithmic improvement in successfully finding the optimal solution, and achieves 4.69$\times$ end-to-end acceleration, compared to prior QAOA designs.


\end{abstract}

\section{Introduction}

The constrained binary optimization is to find an assignment of binary variables that minimize or maximize the objective function under the constraints. Typically, the constraints usually comprise multiple linear equations, namely equality, e.g., $C\vec{u}=\vec{c}$. There are various representative applications of this problem, such as facility location~\cite{facilityoverview}, project scheduling~\cite{projectscheduling}, portfolio optimization~\cite{portfoliooptimization}, political districting~\cite{politicaldistricting}, and energy system optimization~\cite{energyopt}. The constrained binary optimization has been demonstrated as an NP-hard problem that takes exponential time and space complexity for the classical solver to get the exact solution~\cite{karp2010reducibility,gurobi}.

\begin{table}[t]
\caption{QAOA designs for constrained binary optimization. }
\vspace{-.3cm}
\label{tab:comparisionwork}
\renewcommand\arraystretch{2.0}
\resizebox{\linewidth}{!}{
\begin{threeparttable}
\begin{tabular}{|l|l|l|l|l|}
\toprule
\textbf{\makecell[l]{Constraint \\ encoding} }
 & \multicolumn{2}{c|}{\textbf{\makecell[c]{Penalty-based}}}  &  \multicolumn{2}{c|}{\textbf{\makecell[c]{Driver-Hamiltonian-based}}}
 \\ \hline

\textbf{Methods} &  \makecell[l]{A. Verma \\ et al. \cite{verma2022penalty}} &  \makecell[l]{Red-QAOA \\\cite{redqaoa}} & \makecell[l]{Yoshioka et al.~\cite{cyclicdriver} \\ \textit{cyclic Hamilt.}} & \makecell[l]{\papername \\ \textit{commute Hamilt.}} \\ \hline
 
\textbf{\makecell[l]{Universality} }  & soft const. & soft const. & \makecell[l]{hard const.\\ only part of linear}  & \makecell[l]{hard const. \\ arbitrary linear}  \\  \hline

\textbf{\makecell[l]{In-constraints \\ rate}}   & 0.03\% & 0.07\%     & 0.67\% & 100\%\\ \hline
 \textbf{\makecell[l]{Success rate}}  &  0.02\% &0.03\% & 0.14\%  & 67.1\%  \\ \hline
 \textbf{\makecell[l]{End-to-end\\ latency}}  &  16.6s & 16.7s & 19.6s & 7.07s  \\
\bottomrule
\end{tabular}
 \begin{tablenotes}
        \item[*] The latency includes computing time and compilation time of QAOA, w/o data communication time. The execution time is estimated based on the IBM Fez model~\cite{ibmq}. 
      \end{tablenotes}
    \end{threeparttable}
 }
 \vspace{-0.2cm}
\end{table}

Quantum computing stands as a revolutionary paradigm to handle this problem by taking advantage of quantum entanglement and superposition. Notably, the quantum approximate optimization algorithm (QAOA) is a class of variational algorithms, which consists of objective Hamiltonian and driver Hamiltonian to iteratively search the optimal solution~\cite{qaoaorigin,verma2022penalty,redqaoa,qaoacompilation,frozenqubits,qaoacompile}. Mathematically, it encodes the objective function and constraints into Hamiltonians and updates the parameters during simulation. In particular, the penalty-based method is a natural idea that inserts the constraint into the objective function as penalty terms~\cite{ayodele2022penalty,brandhofer2022benchmarking,verma2022penalty}. While, another approach is to encode the constraints into the driver Hamiltonian~\cite{cyclicdriver,hadfield2017quantum,quantum_alternating_operator_ansatz}.

However, existing algorithms are severely limited by the low success rate, originating from the inability to thoroughly encode the constraints. Table~\ref{tab:comparisionwork} summarizes several features of previous QAOA designs and compares them on graph coloring problem~\cite{graphcoloring} using a 15-qubit simulator.
Here, we focus on two critical metrics: 1) \textit{in-constraints rate}, the probability that the output solutions satisfy all constraints; 2) \textit{success rate}, the probability of getting the optimal solution. We observe an extremely low success rate across all prior methods. Especially, the penalty-based methods even exhibit a close-to-zero success rate, smashing quantum advantages. 

Fundamentally, the penalty-based method only supports soft constraint encoding, which highly depends on the coefficient of the penalty terms. Concretely, a small penalty coefficient may fail to take the constraints into account, as shown in Figure~\ref{fig:intro}~(a). Nevertheless, a high coefficient will diminish the gap between the optimal solution and non-optimal ones when calculating the objective, decreasing the success rate. Thus, tuning the penalty coefficient gives rise to lots of time involved in classical computing. For example, in Table~\ref{tab:comparisionwork}, Red-QAOA~\cite{redqaoa} takes 16.7s, composed of the classical part and quantum part, and shows an inferior success rate of 0.03\%.

Note that the penalty-based QAOA inherently expresses the Hamiltonian matrix described by the Ising model~\cite{isingmodel}, which is a mathematical representation of the evolution of a quantum system. Therefore, an alternative approach to encode constraints is to develop other types of Hamiltonian, such as the quantum alternating operator ansatz that encodes the constraints into the driver Hamiltonian~\cite{quantum_alternating_operator_ansatz,cyclicdriver}. However, prior Hamiltonian-based method has to restrict the constraints in a specific format, lacking the generality to address arbitrary linear equality. For example, the cyclic Hamiltonian-based simulation~\cite{cyclicdriver} only supports the equality described in summation format (e.g., $x_1+x_2+x_3=2$, or $-x_1-x_2-x_3=-1$). Thus, when coping with complex constraints, it may locate solutions in the non-constrained space, as shown in Figure~\ref{fig:intro}~(a).
These approaches also exhibit poor success rates, as shown in Table~\ref{tab:comparisionwork}. Moreover, the implementation of Hamiltonian requires decomposing it into deployable basic gates, accounting for exponential time complexity, leading to long end-to-end latency.

\begin{figure}[t]
    \centering
    \includegraphics[width=0.98\linewidth]{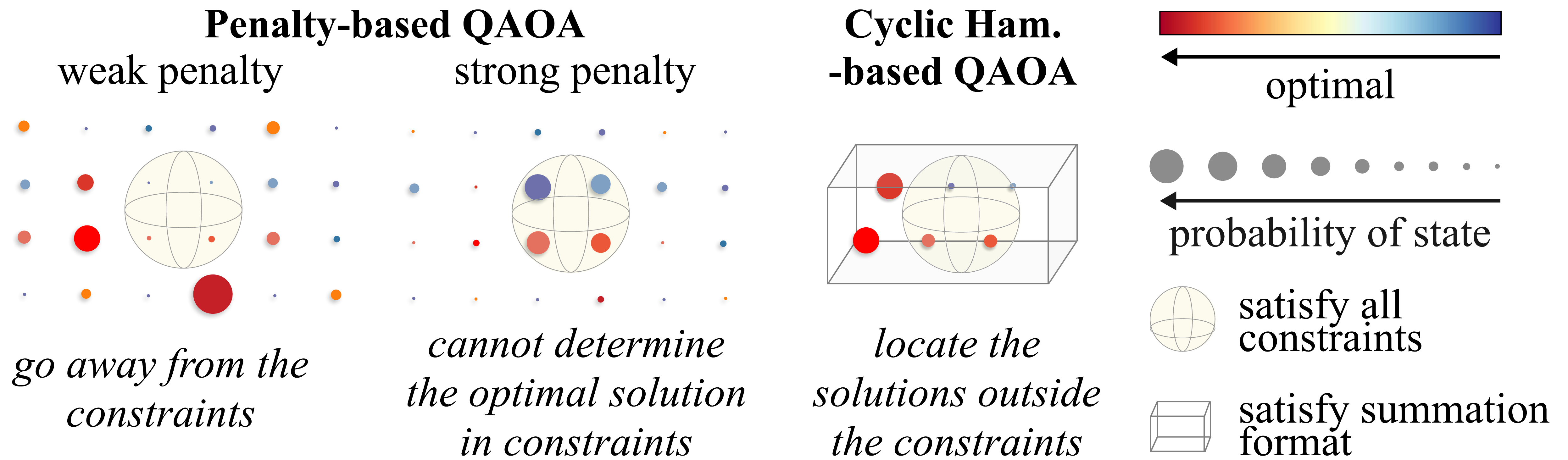}
     \\
     \vspace{-0.1cm}
    \makebox[0.90\linewidth]{\footnotesize (a) The solution space under prior QAOA designs.}
    \vspace{0.25cm} 
    \\
    \includegraphics[width=.98\linewidth]{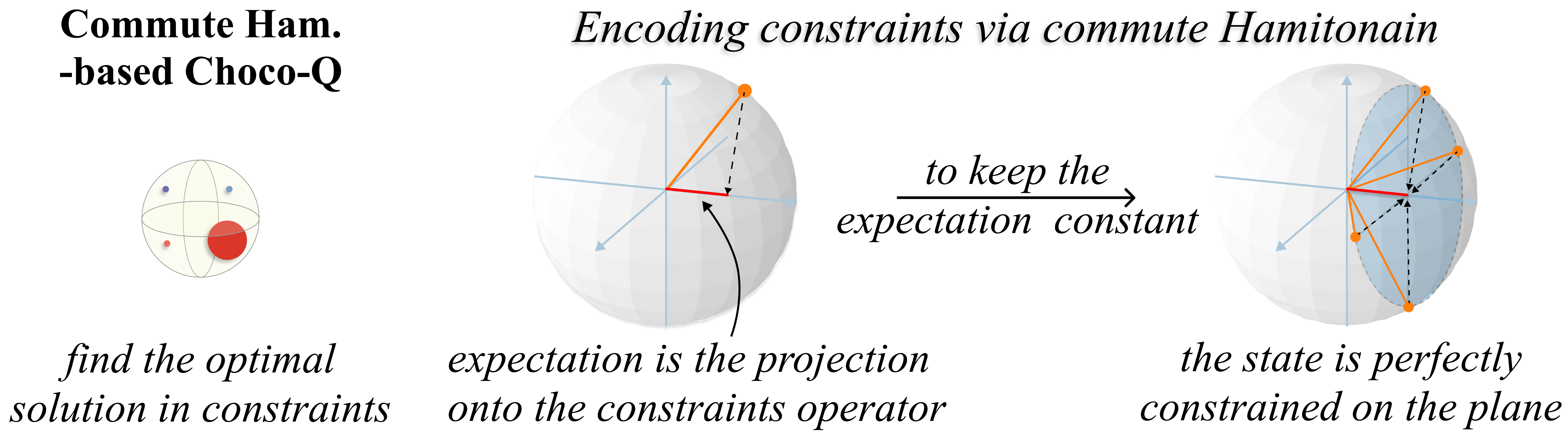}
    \\
    \vspace{-0.1cm} 
    \makebox[0.90\linewidth]{\footnotesize (b) The foundation of commute Hamiltonian-based approach.}
    \\
    \vspace{-0.15cm} 
    \caption{Comparison between \papername and other QAOA designs.}
    \label{fig:intro}
    
\end{figure}

In this work, we propose \papername, a formal and universal framework that smoothly (like a \underline{choco}late) tackles all these limitations. The key novelty of it lies on an expressive representation for encoding the constraints \textemdash ~commute Hamiltonian. In quantum mechanics, Heisenberg's picture states that if the Hamiltonian commutes with an operator, the expectation of the operator remains invariant during the evolution~\cite{heisenbergpicture}. As illustrated in Figure~\ref{fig:intro}~(b), the expectation value of the operator is calculated by projecting the quantum state onto the operator axis. To maintain a constant expectation value, the quantum states are constrained on the plane orthogonal to the operator axis. 
With this understanding, we can deduce that if a Hamiltonian commutes with the constraints operator, the states will be restricted to a subspace where each state adheres to the constraints.
Starting with the commute Hamiltonian, we comprehensively reformulate the QAOA algorithm flow, which is capable of precisely covering arbitrary linear constraints. Furthermore, restricting the evolution under the commute operator substantially shrinks the search space, bringing a great reduction in the number of iterations. 

Then, we present three optimization passes to facilitate end-to-end acceleration and make \papername deployable on current NISQ devices. First, we propose to decouple the commute Hamiltonian into a series of local Hamiltonian, preventing massive tensor computation in the classical part. Unlike the prior Hamiltonian which has to repeatedly execute the Hamiltonian for approximation, we only need to execute one block with much less circuit depth. Though we serialize the entire Hamiltonian into smaller ones, it still has $4^N$ degrees of freedom, leaving a huge unitary decomposition overhead for generating high-quality circuits. To handle this, we design a novel unitary decomposition for the commute Hamiltonian, which achieves linear complexity in both time and circuit depth while keeping strict equivalence. Finally, we employ a variable elimination technique to further reduce the circuit depth from thousands to hundreds. 

The main contributions of this paper include:

\begin{itemize}
    \item We propose a formal, general, and deployable quantum algorithm for constrained binary optimization problems, which leverages the commute Hamiltonian to encode the constraints. This is the first work that accomplishes the goal of a fast and high-confident solver for this problem.

    \item We propose a series of optimization techniques to improve the deployability of the commute Hamiltonian on quantum hardware, including Hamiltonian serialization, equivalent decomposition, and variable elimination. We demonstrate the effectiveness and reliability of these optimizations by introducing two lemmas, which theoretically ensure the linear complexity in both decomposition time and circuit depth.

    \item We conduct rigorous experiments on real-world quantum devices using three practical application scenarios: facility location, graph coloring, and K-partition problem. \papername shows remarkable success rates and achieves end-to-end speedup compared to prior works. 

\end{itemize}

\noindent In algorithmic level, for the problems that prior methods can find the optimal solution, \papername increases the success rate by $235\times$ compared to the state-of-the-art QAOA algorithm that applies cyclic Hamiltonian~\cite{cyclicdriver}. For the problems that prior methods fail to solve, \papername still exhibits 9.5\% - 54\% success rates. On real-world quantum hardware, \papername improves the success rate by $2.65\times$ and achieves a $4.69\times$ speedup compared to \cite{cyclicdriver}. Besides, we reduce the decomposition time over $10^6\times$ and reduce the circuit depth more than $10^4\times$, compared to the existing compilation methods~\cite{trotterbasedsimulation,probabilisticunitarysynthesis,li2019tackling}. \papername is publicly available on \url{https://github.com/JanusQ/Choco-Q.git}.

\section{Background}

\subsection{Constrained Binary Optimization}
The constrained binary optimization problem aims to find an optimal assignment of binary variables to minimize or maximize the objective function meanwhile satisfying several constraints. The objective function takes the binary variables as inputs and returns a scalar. The constraints are represented using linear equations, 
\begin{equation}
    \begin{aligned}
        \min_{\vec{x}}~\text{or}~\max_{\vec{x}} &\quad f(\vec{x}),~~~\vec{x}=\{x_1,x_2,\cdots,x_n\}\\
        s.t. &\quad C\vec{x}= \vec{c},~~~\vec{x}\in \{0,1\}^{\otimes n}\\
    \end{aligned}
    \label{eq:cboexample}
\end{equation}
where $C$ is the constraint matrix that contains the coefficients of each linear equation. Figure~\ref{fig:cboexample} (a) gives an example with four binary variables and two constraint equations. And the optimal solution to maximize the objective function is $\{1,0,1,0\}$. The time complexity of classical algorithms grows exponentially with the number of variables and constraints, e.g., the worst-case time complexity is $O(2^n)$ using integer linear programming~\cite{boundbinaryprogramming}.

\subsection{QAOA Preliminaries}\label{subsec:qaoa}
For binary optimization, the QAOA algorithm puts each variable in a qubit and prepares a parameterized quantum circuit to encode the objective function. Generally, it consists of four steps, as shown in Figure~\ref{fig:cboexample}~(b). \textbf{Step 1} initializes the input states through a layer of H gates, which generates a uniform distribution across all possible variable assignments. \textbf{Step 2} constructs the Hamiltonian $H^o$ of the objective function by substituting the variable $x_j$ with $\frac{I_j-Z_j}{2}$ in the objective, where $I_j$ and $Z_j$ are the identity operator and the Pauli Z operator on qubit $j$, respectively. The driver Hamiltonian comprises a set of $R_X$ gates, which is used to parameterize the variable assignments for variational optimization. \textbf{Step 3} simulates Hamiltonian by executing the quantum circuit after compilation. This circuit involves multiple layers of unitary $e^{-i\gamma_l H^o}$ and $R_X(\beta_l)$ gates, where $\gamma_l$ and $\beta_l$ are the parameters for the $l$-th layer. \textbf{Step 4} measures all qubits and generates a sequence of bitstrings representing the variable assignments. Specifically, the QAOA algorithm calculates the results of the objective function under these assignments. Then, it iteratively updates the parameters in Hamiltonian simulation to approximate the optimal solution, e.g., using gradient descent~\cite{gdVQA} or linear approximation method~\cite{cobyla}.

\begin{figure}[t]
    \centering
    \includegraphics[width=.92\columnwidth]{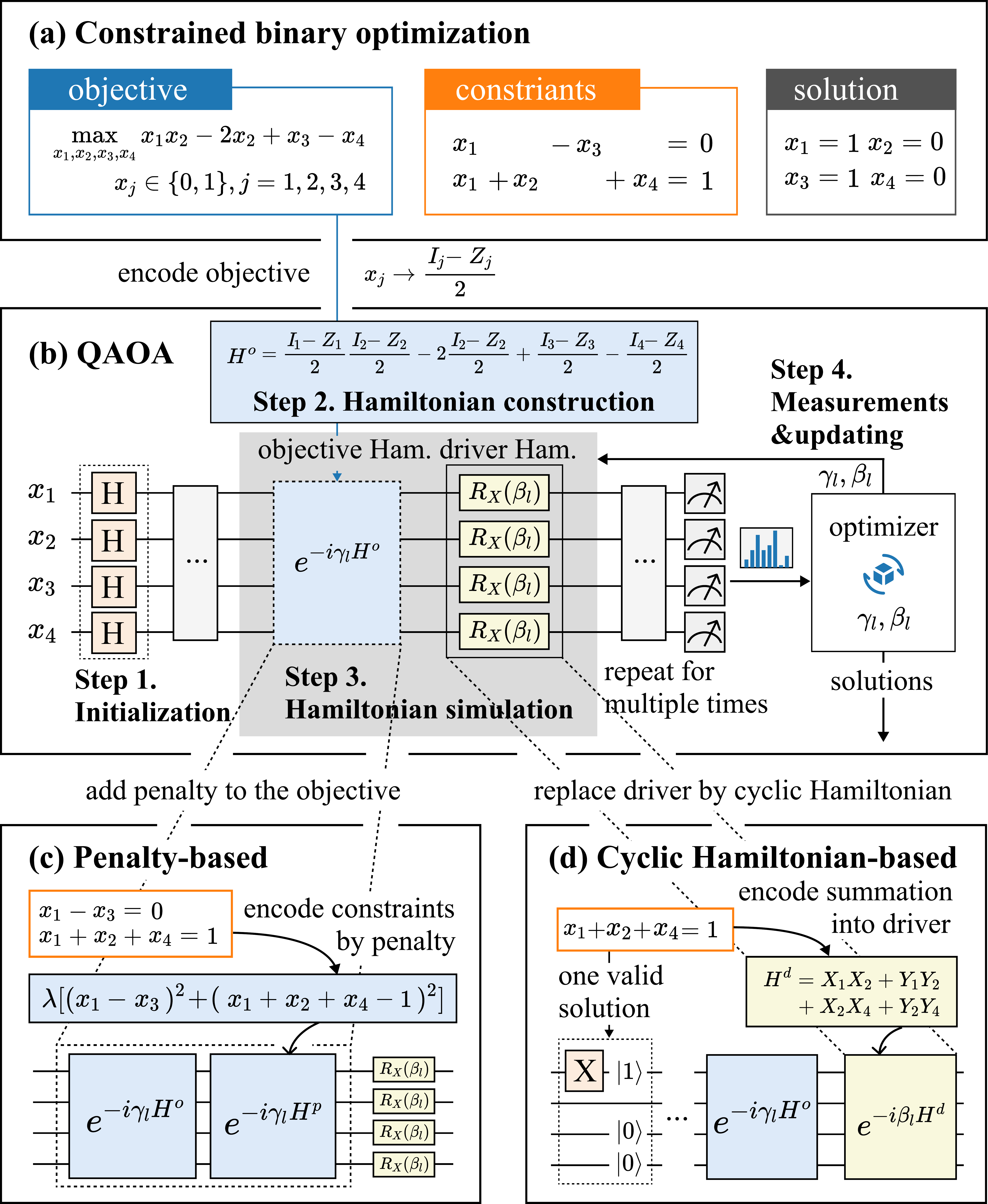}
    \vspace{-0.1cm}
    \caption{Solving a constrained binary optimization using QAOA-based designs. (a) The example of a constrained binary optimization problem. (b) A typical QAOA flow that encodes objective into objective Hamiltonian, without the support of encoding constraints. (c) Extending QAOA by adding penalty terms to solve the constrained problem. (d) Applying cyclic Hamiltonian to encode constraints (only in summation format) into driver Hamiltonian $H^d$.}
    \label{fig:cboexample}
\end{figure}

A natural idea to support constrained binary optimization is to insert the constraints into the objective function as a penalty term~\cite{verma2022penalty} (we call it \emph{soft constraint}). For example, in Figure~\ref{fig:cboexample}~(c), the constraints are formulated by introducing a positive coefficient $\lambda$. 
\begin{equation}
\small
    \begin{aligned}
        \lambda[(x_1-x_3)^2+ (x_1+x_2+x_4-1)^2] \notag
    \end{aligned}
\end{equation}
Theoretically, to ensure the constraints are satisfied, this penalty term should be zero when converging to the optimal solution. During Hamiltonian simulation, an additional penalty Hamiltonian $H^p$ is required, which can be calculated by substituting $x_j=\frac{I_j-Z_j}{2}$ into the penalty term.


An alternative approach is to design a new driver Hamiltonian for the constraints (we call it \emph{hard constraint}). For example, the cyclic Hamiltonian~\cite{cyclicdriver} is able to deal with the summation format of the constraints, 
which is inspired by the one-dimensional Ising model~\cite{isingmodel}.
\begin{equation}
\small
    \begin{aligned}
        H^d=\sum_{i=1}^{n-1} X_{i}X_{(i+1)} + Y_{i}Y_{(i+1)}
    \end{aligned}
    \label{eq:cyclicham}
\end{equation}
Here $X_i$ and $Y_i$ are the Pauli-X and Pauli-Y operators on qubit $q_i$, respectively, which are determined by the variables in constraint. For example, the constraint $x_1+x_2+x_4=1$ involves variables $x_1,x_2,x_4$, and its cyclic Hamiltonian is
\begin{equation}
    \small H^d= X_1X_2+Y_1Y_2 +X_2X_4+Y_2Y_4 
    \notag    
\end{equation}
In addition, we need to set the initial state to the solution of the constraint equation. For example, one solution of the constraint equation $x_1+x_2+x_4=1$ is $x_1=1,x_2=0,x_4=0$, thereby setting the three qubits to $\ket{100}$ as shown in Figure~\ref{fig:cboexample}~(d).

\section{\papername Formulation}

·Compared to penalty-based methods, encoding to driver Hamiltonian allows hard constraints and higher accuracy. However, the cyclic Hamiltonian is fundamentally limited by the summation format of the constraints that all coefficients must have the same sign, e.g., $x_1+x_3=1$, $-x_1-x_2=-1$, and different constraint equations cannot share the same variables. Such limitation arises from the fact that the cyclic Hamiltonian, derived from the Ising model, has to keep the number of excited states $\ket{1}$ constant to encode the constraint. 

Inherently, QAOA is a discrete simulation of the quantum evolution under objective Hamiltonian and driver Hamiltonian. Furthermore, for a linear constraint $\sum_{i=1}^n{c_ix_i}=c$, its operator is defined as,
\begin{equation}\label{eq:cons_op}
    \small
    \hat{C}=\sum_{i=1}^n{c_i\sigma^z_i}
\end{equation}
where the driver Hamiltonian is responsible for maintaining the expected value of this operator constant during evolution. On the other hand, the Heisenberg picture states that if the driver Hamiltonian commutes with the operator, its expected value of the operator will remain unchanged~\cite{heisenbergpicture}. With this understanding, we propose to employ the commute operator as driver Hamiltonian, namely commute Hamiltonian, which turns out to be a more general QAOA algorithm that enables arbitrary linear constraints for binary optimization, facilitating accuracy, scalability, and deployability.

\subsection{Applying Commute Hamiltonian to QAOA}
\label{sec:commutehamiltonian}
The commute Hamiltonian supports hard constraints that can always satisfy the constraints. Commute Hamiltonian originates from the Heisenberg picture~\cite{heisenbergpicture}, which is an equal expression of the Schrödinger equation~\cite{eqschandheisenberg}. The Heisenberg picture gives that for any operator $\hat{A}$, the variation rate $\frac{d \hat{A}}{d t}$ under the evolution of Hamiltonian $H$ is proportional to the commutation operator between $\hat{A}$ and $H$. 
\begin{equation}
\small
    \begin{aligned}
    & \frac{d \hat{A}}{d t} = \frac{i}{\hbar}[\hat{A},H] \\
    & [\hat{A},H]= \hat{A}H-H\hat{A}
    \end{aligned}
    \label{eq:commute}
\end{equation}
Here, $[\hat{A},H]$ is the commutation operator. $[\hat{A}, H]=0$ leads to $\frac{d \hat{A}}{d t} = 0$, which implies that the expected value of $\hat{A}$ will not change during the evolution of $H$. In other words, we can find a driver Hamiltonian $H^d$ that commutes with the constraint operator $\hat{C}$ for constrained binary optimization $[\hat{C}, H^d] =0$. 
In QAOA, the Hamiltonian $H$ comprises the objective Hamiltonian $H^o$ and the driver Hamiltonian $H^d$, denoted as $H = H^o + H^d$. Note that arbitrary objective Hamiltonian $H^o$ commutes with $\hat{C}$. This is because $H^o$ only involves $I$ and $\sigma^z$ operators, which always commute with the $\sigma^z$ operator in $\hat{C}$.
Next, our goal is to find the driver Hamiltonian $H^d$ that also commutes with $\hat{C}$. Hannes et al.~\cite{constructHd} gives a universal equation to find the commute Hamiltonian for specific constraints.
\begin{equation}
\small
\begin{aligned}
& H^d = \sum_{\vec{u}\in \Delta}H_{c}(\vec{u})= \sum_{\vec{u}\in \Delta} 
(\sigma_1^{u_1}\cdots \sigma_n^{u_n}+ 
\sigma_1^{-u_1}\cdots \sigma_n^{-u_n} )\\
& C\vec{u}=\vec{0},\;\vec{u} = \{u_1,u_2,u_3,...,u_n\}, ~u_i\in \{1,0,-1\} \\
& \sigma_i^{+1}= \begin{bmatrix}
0  & 0\\
1 & 0
\end{bmatrix},~~\sigma_i^{0}= \begin{bmatrix}
1  & 0\\
0 &1
\end{bmatrix},~~\sigma_i^{-1}= \begin{bmatrix}
0  & 1\\
0 &0
\end{bmatrix}
\end{aligned}
\label{eq:hd}
\end{equation}
where $\Delta$ is the set of all valid solutions of $\vec{u}$ that satisfies $C\vec{u}=0$, and $H_c(\vec{u})$ is the commute Hamiltonian for each valid $\vec{u}$.

\begin{figure}[t]
    \centering
    \includegraphics[width=.95\linewidth]{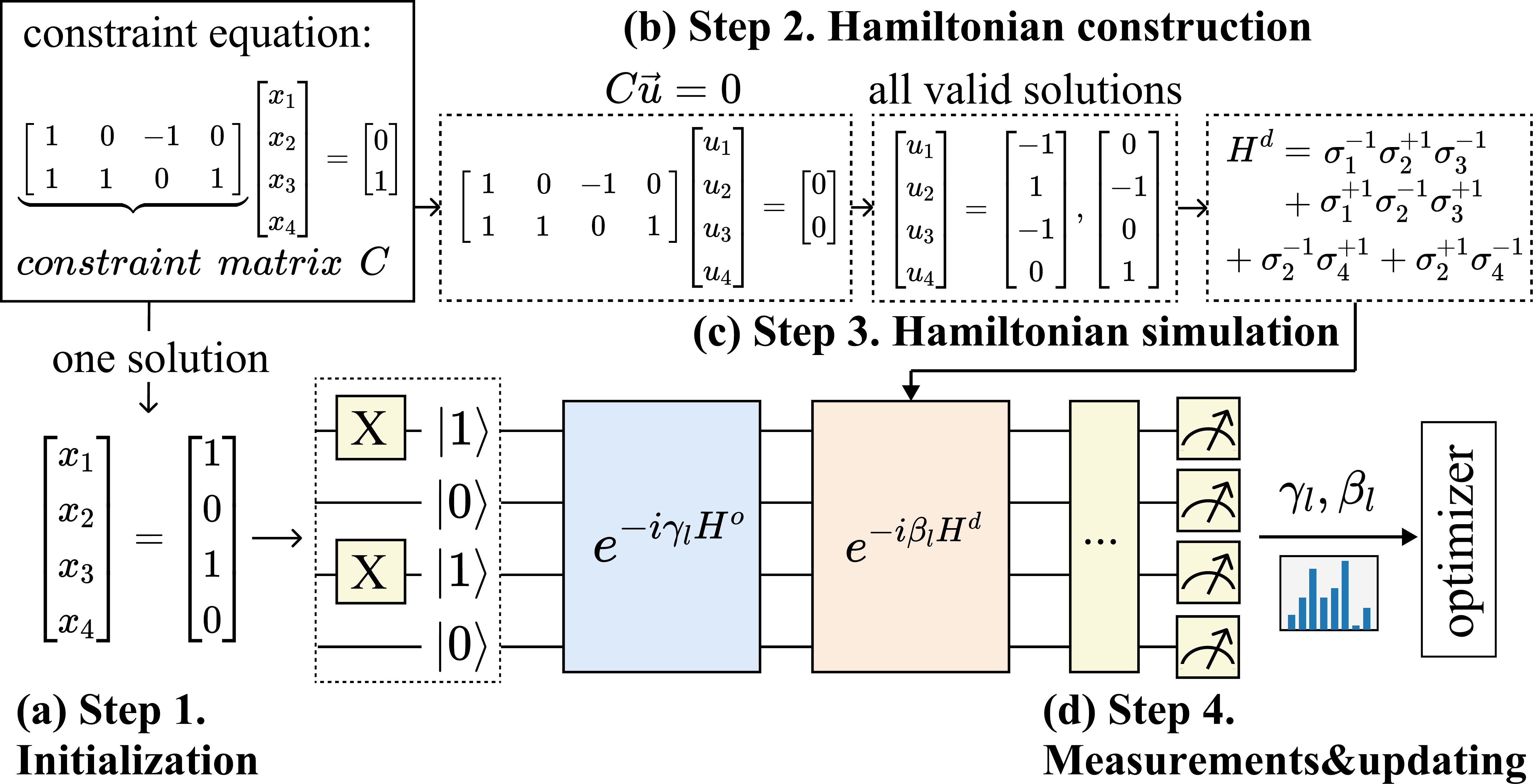}
    \vspace{-0.1cm}
    \caption{The workflow of encoding the constraints via commute Hamiltonian.}
    \label{fig:commuteflow}
\end{figure}

Figure~\ref{fig:commuteflow} depicts the basic flow of applying commute Hamiltonian to QAOA. Similarly, the initial quantum states are set to one solution of the constraint equation. The main difference lies in the design of the commute Hamiltonian and its simulation. Since the constraints are encoded by setting the commutation operator to zero $[\hat{C}, H^d] =0$, all solutions of $C\vec{u}=0$ form the commute Hamiltonian. For example, $\vec{u}^{1}=[-1,1,-1,0]$ and $\vec{u}^{2}=[0,-1,0,1]$ are all valid solutions for the case in Figure~\ref{fig:commuteflow}~(b), resulting in the following Hamiltonian according to Equation \eqref{eq:hd}.
\begin{equation}
\small
\begin{aligned}\label{eq:hdexample}
H^d & = H_{c}(\vec{u}^{1}) + H_{c}(\vec{u}^{2}) \\
 H_{c}(\vec{u}^{1}) &=\sigma_1^{-1} \sigma_2^{+1} \sigma_3^{-1}
+\sigma_1^{+1} \sigma_2^{-1} \sigma_3^{+1} \\
H_{c} (\vec{u}^{2}) &= \sigma_2^{-1} \sigma_4^{+1}+\sigma_2^{+1} \sigma_4^{-1}
\end{aligned}
\end{equation}
Consequently, the Hamiltonian simulation process is formulated by repeatedly executing the objective Hamiltonian $H^o$ and the commute Hamiltonian $H^d$.
\begin{equation}\label{eq:circuit}
\small
\begin{aligned}
& \ket{\phi_{\theta}} = \underbrace{e^{-i\beta_L H^d}e^{-i\gamma_L H^o} \cdots e^{-i\beta_1 H^d}e^{-i\gamma_1 H^o}}_{\text{repeat for}\;L\;\text{times}} \ket{\vec{x^*}} \\
& \theta = \{\gamma_l,\beta_l\}^L_{l=1} = \{\beta_1,\gamma_1,\beta_2,\gamma_2,...,\beta_L,\gamma_L\} 
\end{aligned}
\end{equation}
Here, $\ket{\vec{x^*}}$ is the initial state in step 1 (e.g, it is $\ket{1010}$ Figure~\ref{fig:commuteflow}~(a)), $\ket{\phi_{\theta}}$ is the final state, and $\theta$ denotes $2L$ adjustable parameters for approximate optimization. From the perspective of commute Hamiltonian, the cyclic Hamiltonian can be regarded as a special case ($c_i \equiv 1$ in Equation \eqref{eq:cons_op}).

\subsection{Advantages and Challenges}

Commute Hamiltonian is a much more general encoding scheme that deals with arbitrary linear constraints. By restricting the evolution under the commute operator, the search space significantly shrinks, resulting in a high-quality solution. More importantly, compared to the penalty-based QAOA, which often requires all-to-all entanglement between qubits, commute Hamiltonian exhibits sparse connectivity, making it hardware-friendly.

\emph{Though the commute Hamiltonian shows higher generality and preciseness, it inevitably introduces computational overhead to obtain the accurate Hamiltonian.} According to Equation \eqref{eq:hd},
the calculation of the commute Hamiltonian involves massive tensor-product and accumulation, further aggravating the overwhelming computational pressure. For $n$ qubits Hamiltonian, the calculation complexity is $O(2^n)$ and the time complexity is $O(n2^n)$.

\emph{Like other QAOA approaches, the Hamiltonian unitary usually leads to huge circuit complexity after decomposition, making it challenging to deploy on NISQ devices.} Typically, existing decomposition methods show exponential complexity to approximate the unitary~\cite{chen2022optimizing,tan2023quct}, and often exhibit high approximation error~\cite{xu2023synthesizing}. For example, trotter decomposition\cite{trotterbasedsimulation} approximates the unitary $e^{-i\beta H^d}$ by dividing it into a series of small unitaries $e^{-i\beta H^d/N}$.
\begin{equation}\label{eq:trotter}\small
e^{-i\beta H^d} =  \underbrace{e^{-i \beta H^d/N}\cdots e^{-i\beta H^d/N}}_{\text{repeat for N times} }
\end{equation}
However, the accuracy of trotter decomposition depends on the number of $e^{-iH^d\beta/N}$ repetitions (the error is $O(1/N^2)$ after repeating N times). Moreover, these small unitaries still have to be decomposed into basic gates, severely increasing the circuit complexity and making it undeployable.

\section{\papername Optimization}

To address the new challenges introduced by commute Hamiltonian, we propose multiple optimization passes. 
To deal with the overhead of tremendous classical computation when calculating the driver Hamiltonian, we serialize the commute Hamiltonian into a set of local Hamiltonian. Unlike the conventional simulation method that exhaustively partitions the Hamiltonian and repeatedly executes them, our serialization technique only needs to calculate the local Hamiltonian that takes less computational cost (Section~\ref{subsec:seqcommte}). Each local commute Hamiltonian still requires to be decomposed into basic gates for deployment. Different from prior approximation-based unitary decompositions~\cite{probabilisticunitarysynthesis,trotterbasedsimulation}, we present an equivalent transformation that decomposes the commute Hamiltonian into several control gates and phase gates. Using this, we demonstrate the linear complexity in both decomposition time and circuit depth (Section~\ref{subsec:phasegate}). By putting these two techniques together, we effectively prevent around GFlops tensor computation and reduce the circuit depth from {$10^{10}$} to $\sim$1000. To further improve deployability on the current NISQ devices, we propose a variable elimination technique that can squeeze the constraint matrix, shrinking the circuit complexity (Section~\ref{subsec: divide}). Finally, we successfully achieve an executable circuit depth, around {100-depth}, meanwhile exhibiting high accuracy for solving the constraint binary optimization problem.

\subsection{Serialization of Commute Hamiltonian}
\label{subsec:seqcommte}

The commute Hamiltonian in Equation~\eqref{eq:hd} exhibits highly-entangled complexity that requires connecting a large number of qubits, making it hard to deploy on sparsely-connected quantum chips. A natural idea is to reduce it into a series of smaller Hamiltonian, namely \textit{local Hamiltonian} that acts only on a few qubits. However, note that when $H_1$ and $H_2$ are complex matrices, $e^{H_1+H_2} \neq e^{H_1}e^{H_2}$, which means that each item $e^{-i\beta H^d}$ in the driver Hamiltonian (Equation~\eqref{eq:circuit}) cannot be directly separated. 
\begin{equation}
    \small
    e^{-i\beta H^d} = e^{-i\beta \sum_{\vec{u}\in \Delta} H_c(\vec{u})} \neq \Pi_{\vec{u}\in \Delta} e^{-i\beta  H_c(\vec{u})}
    \notag
\end{equation}
Taking $\vec{u}^1= [-1,0], \vec{u}^2= [-1,1], \beta = 0.8$ as an example, we can easily verified that $e^{-i\beta (H_c(\vec{u}^1)+ H_c(\vec{u}^2))} \neq e^{-i\beta H_c(\vec{u}^1)}e^{-i\beta H_c(\vec{u}^2)}$. Essentially, the function of the commute Hamiltonian lies on the encoding of the constraints. In other words, though the separation of Hamiltonian fails to keep the equivalence, we prove that it still meets the constraint operator of Equation \eqref{eq:cons_op}, which provides the opportunity to shrink the circuit complexity.

\begin{lemma}\label{theorem:driverseq}
    Replacing the commute Hamiltonian unitary $e^{-i\beta H^d}$ with the serialization of unitaries $\Pi_{\vec{u}\in \Delta} e^{-i\beta  H_c(\vec{u})}$ still satisfies the constraint operator during evolution. 
    \begin{equation}
    \small
    \begin{aligned}
     & \ket{x_c}= e^{-i\beta H^d} \ket{x},~\ket{x_s}= \Pi_{\vec{u}\in \Delta} e^{-i\beta H_c(\vec{u})} \ket{x}\notag \\
     & \bra{x_s}\hat{C} \ket{x_s} = \bra{x_c}\hat{C} \ket{x_c}
    \end{aligned}
    \end{equation}
\end{lemma}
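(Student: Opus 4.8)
The plan is to prove something slightly stronger than what is asked: that \emph{each} local term $H_c(\vec{u})$ commutes individually with the constraint operator $\hat{C}$. Once this is in hand, both the exact unitary $U_c = e^{-i\beta H^d}$ and the serialized product $U_s = \Pi_{\vec{u}\in\Delta} e^{-i\beta H_c(\vec{u})}$ inherit commutation with $\hat{C}$, and the equality of the two expectation values follows at once, because a unitary that commutes with $\hat{C}$ leaves its expectation invariant.

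First I would establish a single-qubit ladder identity. Writing $\sigma_i^{+1}=\ket{1}\bra{0}$ and $\sigma_i^{-1}=\ket{0}\bra{1}$, a direct computation using $\sigma_i^z\ket{0}=\ket{0}$ and $\sigma_i^z\ket{1}=-\ket{1}$ gives the uniform relation
\[
\sigma_i^z\,\sigma_i^{u_i} = \sigma_i^{u_i}\,(\sigma_i^z - 2u_i), \qquad u_i\in\{-1,0,1\}.
\]
Since operators on distinct qubits commute, I would push $\hat{C}=\sum_i c_i\sigma_i^z$ through the product $P(\vec{u})=\sigma_1^{u_1}\cdots\sigma_n^{u_n}$ one qubit at a time, obtaining
\[
\hat{C}\,P(\vec{u}) = P(\vec{u})\bigl(\hat{C} - 2\textstyle\sum_i c_i u_i\bigr) = P(\vec{u})\bigl(\hat{C}-2\,C\vec{u}\bigr).
\]
The defining condition $C\vec{u}=\vec{0}$ of the set $\Delta$ annihilates the shift, so $[\hat{C},P(\vec{u})]=0$; the conjugate term $P(-\vec{u})$ satisfies $C(-\vec{u})=\vec{0}$ as well, hence $[\hat{C},H_c(\vec{u})]=0$ for every $\vec{u}\in\Delta$.

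Next I would lift the commutation to the unitaries. Because $\hat{C}$ commutes with $H_c(\vec{u})$, it commutes with every power of $H_c(\vec{u})$ and therefore with $e^{-i\beta H_c(\vec{u})}$; the product over $\vec{u}\in\Delta$ then commutes with $\hat{C}$, giving $[\hat{C},U_s]=0$. Likewise $H^d=\sum_{\vec{u}\in\Delta}H_c(\vec{u})$ commutes with $\hat{C}$, so $[\hat{C},U_c]=0$. Finally, for any unitary $U$ with $[\hat{C},U]=0$, unitarity gives $U^\dagger\hat{C}U = U^\dagger U\hat{C}=\hat{C}$, whence $\bra{x}U^\dagger\hat{C}U\ket{x}=\bra{x}\hat{C}\ket{x}$. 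Applying this to both $U_c$ and $U_s$ yields $\bra{x_s}\hat{C}\ket{x_s}=\bra{x}\hat{C}\ket{x}=\bra{x_c}\hat{C}\ket{x_c}$, which is the claim.

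The main obstacle is the first step: getting the sign bookkeeping in the ladder identity right and confirming that the shift accumulated across all $n$ qubits is exactly $2\,C\vec{u}$, with no cross terms or ordering subtleties. Everything afterward—propagating commutation through the exponential series and invoking unitary invariance of the expectation—is routine. One point worth flagging is the multi-constraint case: when $C$ is a matrix, $\hat{C}$ splits into a family of operators $\hat{C}_k=\sum_i C_{ki}\sigma_i^z$, and the argument applies row by row, since $C\vec{u}=\vec{0}$ forces $\sum_i C_{ki}u_i=0$ for each $k$.
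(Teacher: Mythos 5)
Your proof is correct and follows essentially the same route as the paper's: show $[H_c(\vec{u}),\hat{C}]=0$ for each local term, lift this to the exponentials via the Taylor series, use the fact that a product of commuting unitaries commutes, and conclude by unitary invariance of the expectation $\bra{x}U^\dagger\hat{C}U\ket{x}=\bra{x}\hat{C}\ket{x}$. The only difference is that you explicitly verify the key commutation via the ladder identity $\sigma_i^z\sigma_i^{u_i}=\sigma_i^{u_i}(\sigma_i^z-2u_i)$ and the cancellation $\sum_i c_iu_i=0$ (handling multi-row $C$ row by row), a step the paper asserts as ``easily verified'' without proof, so your write-up is, if anything, more complete.
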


\begin{figure}[t]
    \centering
\includegraphics[width=0.98\linewidth]{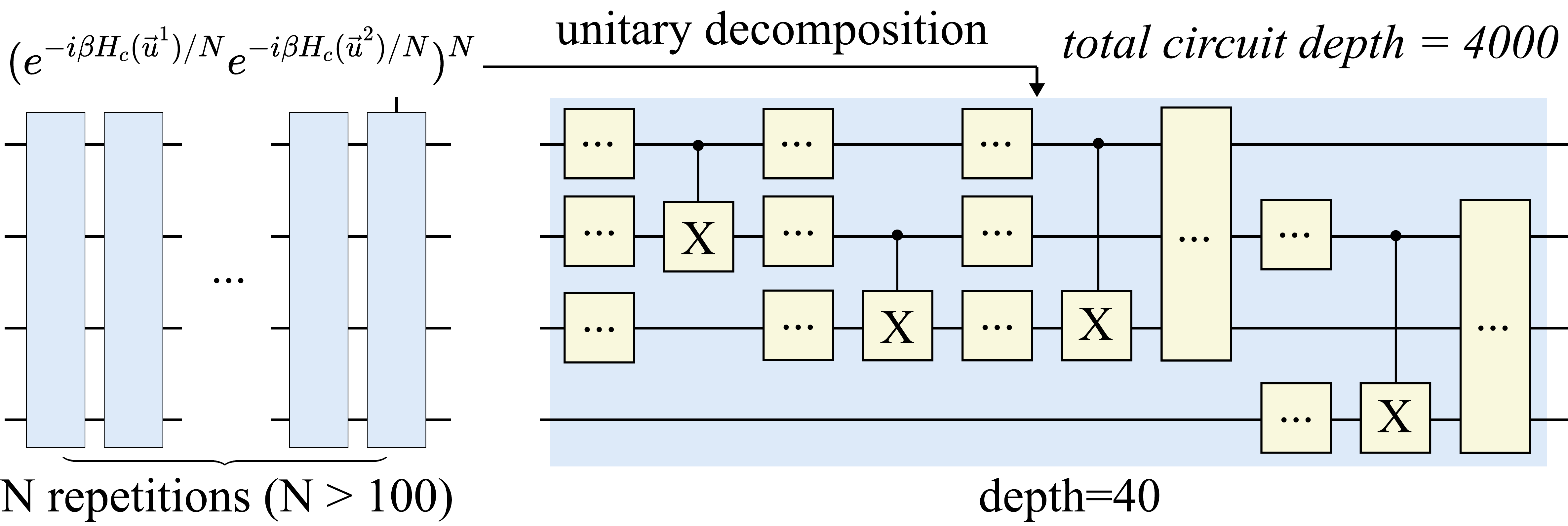}
     \\
     \vspace{-0.1cm}
    \makebox[0.90\linewidth]{\footnotesize (a) Trotter decomposition~\cite{trotterbasedsimulation}.}
    \vspace{0.25cm} 
    \\
    \includegraphics[width=.98\linewidth]{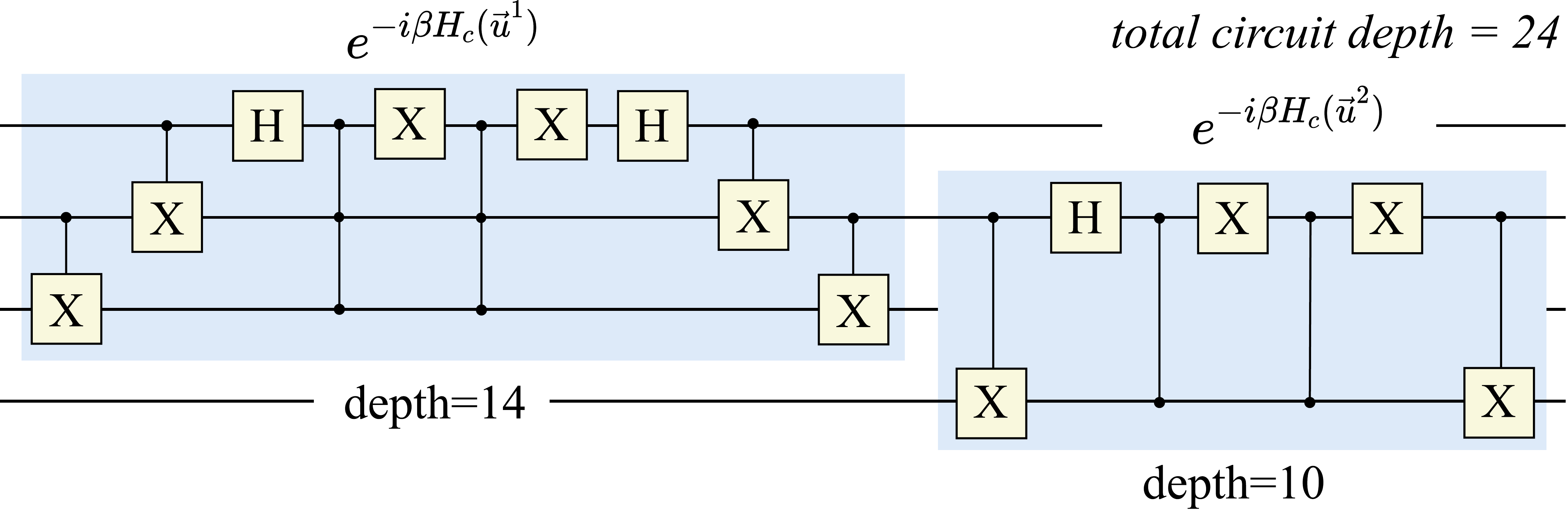}
    \\
    \vspace{-0.1cm} 
    \makebox[0.90\linewidth]{\footnotesize (b) Serialization of commute Hamiltonian.}
    \\
    \vspace{-0.15cm} 
    \caption{Different Hamiltonian simulation methods for Equation \eqref{eq:hdexample}.}
    \label{fig:seqdriver}
\end{figure}

\begin{proof}
    The expected value of the constraints operator $\hat{C}$ is calculated by the inner product $\bra{x}\hat{C}\ket{x}$, where $\ket{x}$ is the initial state. The quantum state $\ket{x_c}$ after applying the commute Hamiltonian unitary satisfies the constraint during evolution:
     \begin{equation}\small
       \bra{x_c}\hat{C} \ket{x_c} = \bra{x}\hat{C} \ket{x}\label{proof1:commuteh}
    \end{equation}
    Then, we use $\hat{B}$ to denote $\Pi_{\vec{u}\in \Delta} e^{-i\beta H_c(\vec{u})}$. The expected value of $\hat{C}$ after applying this serialized unitaries $\hat{B}$ can be calculated as follows,
    \begin{equation}\small
    \begin{aligned}
    \bra{x_s}\hat{C} \ket{x_s} &= \bra{\hat{B}x} \; \hat{C}\ket{\hat{B} x}  = \bra{x} \hat{B}^{\dagger} \;\hat{C}\hat{B} \ket{x}  \\
    \end{aligned}\label{proof1:eq1}
    \end{equation}
    where $\hat{B}^{\dagger}$ denotes the conjugate transpose of $\hat{B}$, satisfying $\hat{B}^{\dagger} \hat{B}$ is the identity operator. On the other hand, we can easily verify that each Hamiltonian $H_c(\vec{u})$ commutes with $\hat{C}$.
    \begin{equation}\small
         [H_c(\vec{u}),\hat{C} ] =0 \notag
         \label{proof1:commutehc}
    \end{equation}
    Furthermore, since the Hamiltonian unitary can be expanded by its Taylor series,
   \begin{equation}\footnotesize
         e^{-i\beta H_c(\vec{u})} =  {\textstyle \sum_{k=1}^{\infty} } (-i \beta H_c(\vec{u})/k!)^k\notag
    \end{equation}
    and we have proven that each term $H_c(\vec{u})$ all commute with $\hat{C}$,
    the Hamiltonian unitaries then commute with $\hat{C}$.
    Thus, their product also commutes with $\hat{C}$, which implies $\hat{C} \hat{B}=  \hat{B} \hat{C}$. Substituting it into Equation \eqref{proof1:eq1}, we have
    \begin{equation}\small
    \begin{aligned}
    \bra{x_s}\hat{C} \ket{x_s} =\bra{x} \hat{B}^{\dagger} \hat{C}\hat{B} \ket{x}  = \bra{x} \hat{B}^{\dagger} \hat{B} \hat{C}\ket{x}  = \bra{x_c}\hat{C} \ket{x_c}
    \end{aligned}\notag
     \vspace{-0.65cm}
    \end{equation}
\end{proof}
\vspace{-0.1cm}

Lemma 1 gives the theoretical foundation for the serialization of commute Hamiltonian, which is specific to the constraint binary optimization problem. Compared to conventional Hamiltonian simulation methods~\cite{hamiltoniansimulationbyqsp}, our approach significantly reduces the circuit depth from thousands to dozens, making it possible to implement the circuit. For example, Figure~\ref{fig:seqdriver}~(a) shows trotter decomposition~\cite{trotterbasedsimulation} that is widely used for implementing the driver Hamiltonian (Equation \eqref{eq:trotter}). This method cuts the Hamiltonian into N pieces and repeatedly executes them, leading to tremendous gates and huge circuit depth. In contrast, Figure~\ref{fig:seqdriver}~(b) depicts the circuit that embeds the constraints using the serialization of commute Hamiltonian. We effectively eliminate the repetition of massive small Hamiltonian, reducing the circuit depth from 4000 to 24 (after decomposition). Moreover, our method exhibits less connection between qubits, which is more hardware-friendly when compiling the circuit for a certain topology. For instance, $H_c(\vec{u}^1)$ acts on three qubits $q_1,q_2,q_3$, and $H_c(\vec{u}^2)$ only involves two qubits $q_2,q_4$.

\subsection{Decomposition of Commute Hamiltonian}
\label{subsec:phasegate}

Though we serialize the large commute Hamiltonian into a set of smaller Hamiltonian, it still requires decomposing these unitaries into basic gates. In particular, the decomposition quality highly determines the final accuracy and the deployability, necessitating a precise and fast methodology. To achieve this, we identify that applying the local commute Hamiltonian to its eigenstate fundamentally behaves like a phase gate, 
\begin{equation}\small
    \begin{aligned}
       & e^{-i\beta H_c(\vec{u})} \ket{x^{\pm}} =   e^{\mp i\beta}\ket{x^{\pm}} \\ 
       & e^{-i\beta H_c(\vec{u})} \ket{other} = \ket{other} 
    \end{aligned}
\end{equation}
where $|x^{\pm}\rangle$ are the eigenstates. 
\begin{equation}\small
    \begin{aligned}
       & |x^{\pm}\rangle = \frac{|v_1\cdots v_n \rangle \pm  |\bar{v}_1\cdots \bar{v}_n\rangle}{\sqrt{2}} \\ 
       & v_i = \frac{1+u_i}{2},\bar{v}_i = \frac{1-u_i}{2}
    \end{aligned}
    \label{eq:x_state}
\end{equation}
Inspired by this property, we first introduce an equivalent transformation that represents the Hamiltonian into multi-phase control gates. Then, we demonstrate the linear complexity in both time and circuit depth.



\begin{lemma}\label{theorem:phasedecomp}
    Each commute Hamiltonian $e^{-i\beta H_c(\vec{u})}$ can be equivalently decomposed as follows.
    \begin{equation}
    \label{eq:decomposition}
    \small
        e^{-i\beta H_c(\vec{u})}= G^\dagger P(\beta)X_1P(-\beta)X_1G 
    \end{equation}
    Here, $X_1$ gate refers to an X gate applied on the first qubit $q_1$, $G$ is a set of gates to convert the eigenstate $\ket{x^{+}}$ and $\ket{x^{-}}$  into basis state $\ket{01\cdots1}$ and $\ket{11\cdots1}$, respectively.
    \begin{equation}\label{eq:ggate}
    \small
    \begin{aligned}
        & G\ket{x^{+}} = \ket{01\cdots1},G\ket{x^{-}} =\ket{11\cdots1}
    \end{aligned}
    \end{equation}
    $P(\beta)$ is a multi-control phase gate that only adds a phase $e^{i\beta}$ to the state $\ket{1\cdots1}$ and keeps the other eigenstates as same. 
    \begin{equation}
    \label{eq:phasegate}
    \small
    \begin{aligned}
        & P(\beta) \ket{1\cdots1} = e^{i\beta} \ket{1\cdots1} \\ 
        & P(\beta) \ket{other} = \ket{other} 
    \end{aligned}
    \end{equation}
\end{lemma}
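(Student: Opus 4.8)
The plan is to reduce the claimed operator identity to a finite check on computational basis states, leaning on the spectral characterization of $e^{-i\beta H_c(\vec u)}$ stated just above the lemma. That characterization tells us $e^{-i\beta H_c(\vec u)}$ acts as multiplication by $e^{-i\beta}$ on $\ket{x^{+}}$, by $e^{+i\beta}$ on $\ket{x^{-}}$, and as the identity on every state orthogonal to $\mathrm{span}\{\ket{x^{+}},\ket{x^{-}}\}$. First I would record that $\ket{x^{+}}$ and $\ket{x^{-}}$ are orthonormal (the cross term $\braket{x^{+}|x^{-}}$ vanishes because $\ket{v_1\cdots v_n}$ and $\ket{\bar v_1\cdots\bar v_n}$ are distinct basis states whenever $\vec u\neq\vec 0$), so that $G$, defined by its action in Equation~\eqref{eq:ggate}, extends to a genuine unitary carrying the two-dimensional eigenspace onto $\mathrm{span}\{\ket{01\cdots1},\ket{11\cdots1}\}$.

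With $G$ in hand, I would rewrite the target as a conjugation: set $D := G\,e^{-i\beta H_c(\vec u)}\,G^{\dagger}$ and argue that $D$ is the diagonal operator with $D\ket{01\cdots1}=e^{-i\beta}\ket{01\cdots1}$, $D\ket{11\cdots1}=e^{+i\beta}\ket{11\cdots1}$, and $D\ket{b}=\ket{b}$ for every other basis state $\ket{b}$. The two nontrivial entries follow directly from $G\ket{x^{+}}=\ket{01\cdots1}$, $G\ket{x^{-}}=\ket{11\cdots1}$ and the eigenvalues above. The identity action on the remaining basis states is the one structural point that needs care: because $G$ is unitary and sends the eigenspace onto $\mathrm{span}\{\ket{01\cdots1},\ket{11\cdots1}\}$, it sends the orthogonal complement exactly onto the span of the remaining $2^{n}-2$ basis states, and $e^{-i\beta H_c(\vec u)}$ is the identity there; hence matching $D$ with the claimed product on the two distinguished basis states suffices.

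It then remains to verify $D=P(\beta)X_1P(-\beta)X_1$, which I would do by tracing the right-to-left action on the three cases. On $\ket{11\cdots1}$: $X_1$ yields $\ket{01\cdots1}$, $P(-\beta)$ leaves it fixed, $X_1$ restores $\ket{11\cdots1}$, and the final $P(\beta)$ supplies $e^{+i\beta}$. On $\ket{01\cdots1}$: $X_1$ yields $\ket{11\cdots1}$, $P(-\beta)$ supplies $e^{-i\beta}$, $X_1$ returns to $\ket{01\cdots1}$, and $P(\beta)$ acts trivially. For any other $\ket{b}$, neither $\ket{b}$ nor its first-qubit flip equals $\ket{1\cdots1}$, so both phase gates act as the identity and $\ket{b}$ is returned unchanged. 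Composing with $G^{\dagger}(\cdot)G$ recovers Equation~\eqref{eq:decomposition}.

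The main obstacle is the bookkeeping in the last step, specifically confirming that $\ket{01\cdots1}$ is the \emph{only} basis state besides $\ket{1\cdots1}$ whose $X_1$-image is $\ket{1\cdots1}$; this is exactly what lets the $X_1$ conjugation single out the two target amplitudes while leaving the entire rest of the diagonal invariant, so that the sandwich $P(\beta)X_1P(-\beta)X_1$ realizes precisely the two opposite phases $e^{\pm i\beta}$ and nothing else. A secondary point worth flagging, rather than proving here, is that $G$ must also be realizable by an explicit, low-depth gate sequence; I would defer that construction and the accompanying linear-complexity claim to the discussion following the lemma, since the present statement only asserts the equivalence of the decomposition.
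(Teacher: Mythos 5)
Your proof is correct and takes essentially the same approach as the paper's: both verify that the sandwich $G^\dagger P(\beta)X_1P(-\beta)X_1G$ reproduces the spectral action of $e^{-i\beta H_c(\vec{u})}$, namely phases $e^{\mp i\beta}$ on $\ket{x^{\pm}}$ and the identity on everything orthogonal to them. The differences are presentational rather than substantive: you organize the check as a conjugation $D = G\,e^{-i\beta H_c(\vec{u})}\,G^{\dagger}$ and explicitly supply details the paper only asserts (orthonormality of $\ket{x^{\pm}}$, the unitary extension of $G$, and the bookkeeping showing that no other basis state or its $X_1$-image equals $\ket{1\cdots1}$), which tightens the argument without changing its route.
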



\begin{proof}
    We prove this lemma by demonstrating that the decomposition in Equation~\eqref{eq:decomposition} maintains the same eigenstates $|x^{\pm}\rangle$. According to Equation~\eqref{eq:ggate}, taking $\ket{x^{+}}$ as an example,
    \begin{equation}
    \small
    \begin{aligned}
        & G^\dagger P(\beta)X_1P(-\beta)X_1G \ket{x^{+}} \\
    =   & G^\dagger P(\beta)X_1P(-\beta)X_1\ket{01\cdots1}
    \end{aligned}\notag
    \end{equation}
    applying $X_1$ gate on qubit $q_1$ will change $\ket{01\cdots1}$ to $\ket{11\cdots1}$. According to Equation~\eqref{eq:phasegate}, the phase gate $P(-\beta)$ will put a phase $e^{-i\beta}$ on this state.
    \begin{equation}
    \small 
       P(-\beta)X_1 \ket{01\cdots1} = e^{-i\beta} \ket{11\cdots1} \notag
    \end{equation}
    Similarly, according to the definition of $G^{\dagger}$ and $P(\beta)$, we can verify that,
    \begin{equation}
    \small
        G^\dagger P(\beta)X_1P(-\beta)X_1G \ket{x^{+}}= e^{-i\beta} \ket{x^{+}} \notag
    \end{equation}    
    Furthermore, since the other states cannot activate the multi-control phase gate this decomposition will keep the other eigenstates unchanged.
    \begin{equation}
    \small
        G^\dagger P(\beta)X_1P(-\beta)X_1G \ket{other}=\ket{other} \notag
     \vspace{-0.65cm}
    \end{equation}
\end{proof}
\vspace{-0.2cm}

\begin{algorithm}[t]
\small
\caption{Implementation of converting gates $G$} \label{alg:Ggate}
\begin{algorithmic} [1]
\REQUIRE solution $\vec{u}=\{u_1,u_2,\cdots,u_n\}$ of $C\vec{u}=0$.
\ENSURE the circuit to implement $G$ gates in Equation~\eqref{eq:ggate}.
\FOR {$i = 1$ to $n$} 
    \STATE $v_i = (1+u_i) / 2$. \textcolor{blue}{// Equation~\ref{eq:x_state}}
\ENDFOR
\STATE \textcolor{blue}{// turn last n-1 qubits into $\ket{11\cdots1}$}
\FOR {$i = n$ to $2$} 
    
    \STATE CX with target qubit $q_i$ and control qubit $q_{i-1}$. \\
    
    \IF{$v_i=v_{i-1}$}
        \STATE applying X on qubit $q_i$.\\
        
    \ENDIF
\ENDFOR

\STATE \textcolor{blue}{// current state is $\ket{s^{\pm}}=(\ket{0}\pm \ket{1})\ket{1\cdots1}$}
\STATE applying H on the first qubit.\textcolor{blue}{// $\ket{s^{+(-)}}  \to \ket{0(1)1\cdots1}$} 
\end{algorithmic}
\end{algorithm}

\begin{figure*}[t]
\centering
\includegraphics[width=.98\textwidth]{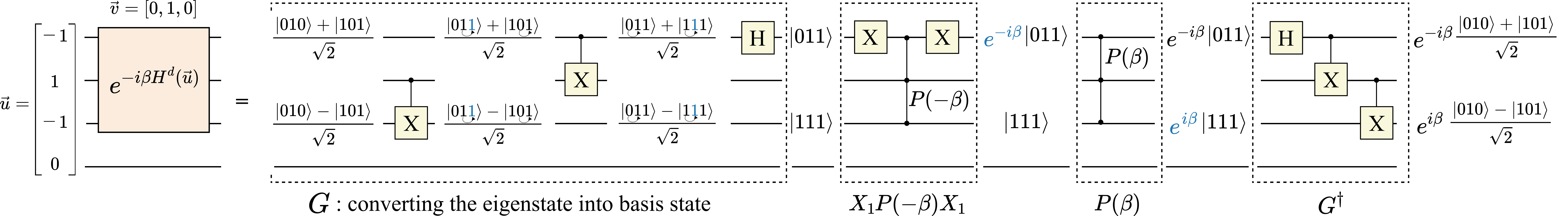}
    \vspace{-0.2cm}
    \caption{The end-to-end decomposition flow of the commute Hamiltonian $e^{-i\beta H_c(\vec{u}^1)}$, following the same example in Equation \ref{eq:hdexample}.}
    \label{fig:phasegatedecomp}
    \vspace{-0.3cm}
\end{figure*}

After decomposing the commute Hamiltonian into several convert gates $G$ and phase gates $P(\beta)$, we then illustrate that compiling these gates to basic gates (e.g., Hadamard, $R_Z$, CX) only takes linear time complexity and linear circuit depth. The implementation of the convert gate $G$ is shown in Algorithm~\ref{alg:Ggate}. First, the eigenstate $\ket{x^{\pm}}$ is calculated using $v_i$ in Equation~\ref{eq:x_state} (line 1-3). The core idea is to transform the last $n-1$ qubits to $\ket{1}$ state (line 4), which is implemented by CX and X gates (lines 5-10). Clearly, the CX gate helps to flip the target qubit if the control qubit is $\ket{1}$, while the X gate flips the state when two qubits are in the same state. After this, the quantum state is manipulated to a reduced state $\ket{s^{\pm}}$.
\begin{equation}
        \small
        \ket{s^{\pm}} = (\ket{0}\pm \ket{1})\ket{1\cdots1}/\sqrt{2}
\end{equation}
Finally, we use an H gate on the first qubits to get $\ket{01\cdots1}$ for $|x^{+}\rangle$ and $\ket{11\cdots1}$ for $\ket{x^{-}}$, respectively. Overall, according to Algorithm~\ref{alg:Ggate}, the time complexity and circuit depth are with $O(n)$ complexity that is linear to the number of qubits in commute Hamiltonian.

The decomposition of the multi-quit phase gate can be visualized as the following equation. 
\begin{figure}[!h]
\centering
\vspace{-0.4cm}
\includegraphics[width=.95\columnwidth]{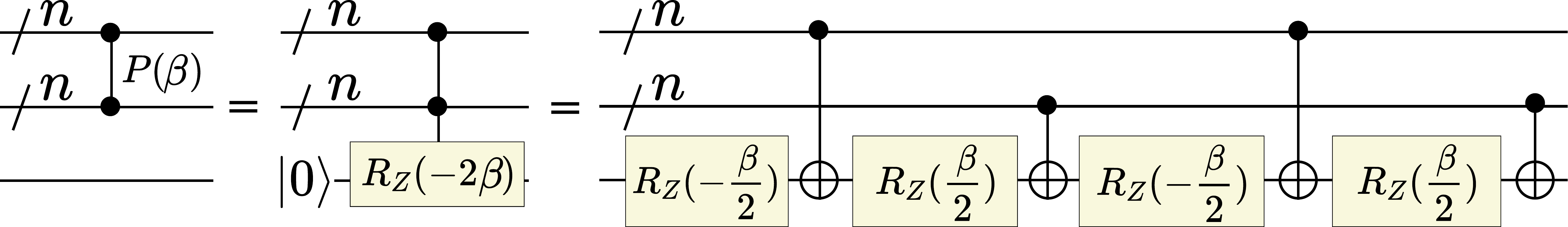}
    \vspace{-0.35cm}
\end{figure}
\noindent We first reformulate the phase gate $P(\beta)$ as an $R_Z$ gate controlled by $2n$ qubits with an ancillary qubit in $\ket{0}$. Next, the control-$R_Z$ gate, with $2n+1$ qubits, can be further implemented by four control-X gates and four $R_Z$ gates, with each control-X gate involving $n+1$ qubits. Since the control-X gates can be implemented with $16n$-qubit gate with one ancillary qubit~\cite{gidneytoffoli}, the overall complexity of decomposing the $n$-qubit phase gate is $O(n)$. On the other hand, there are two ancillary qubits in total, which can be reused in the entire circuit of commute Hamiltonian simulation. Thus, the overall circuit complexity is also $O(n)$ with only two ancillary qubits.

\begin{figure}[t]
\vspace{-0.3cm}
\centering
\includegraphics[width=.98\columnwidth]{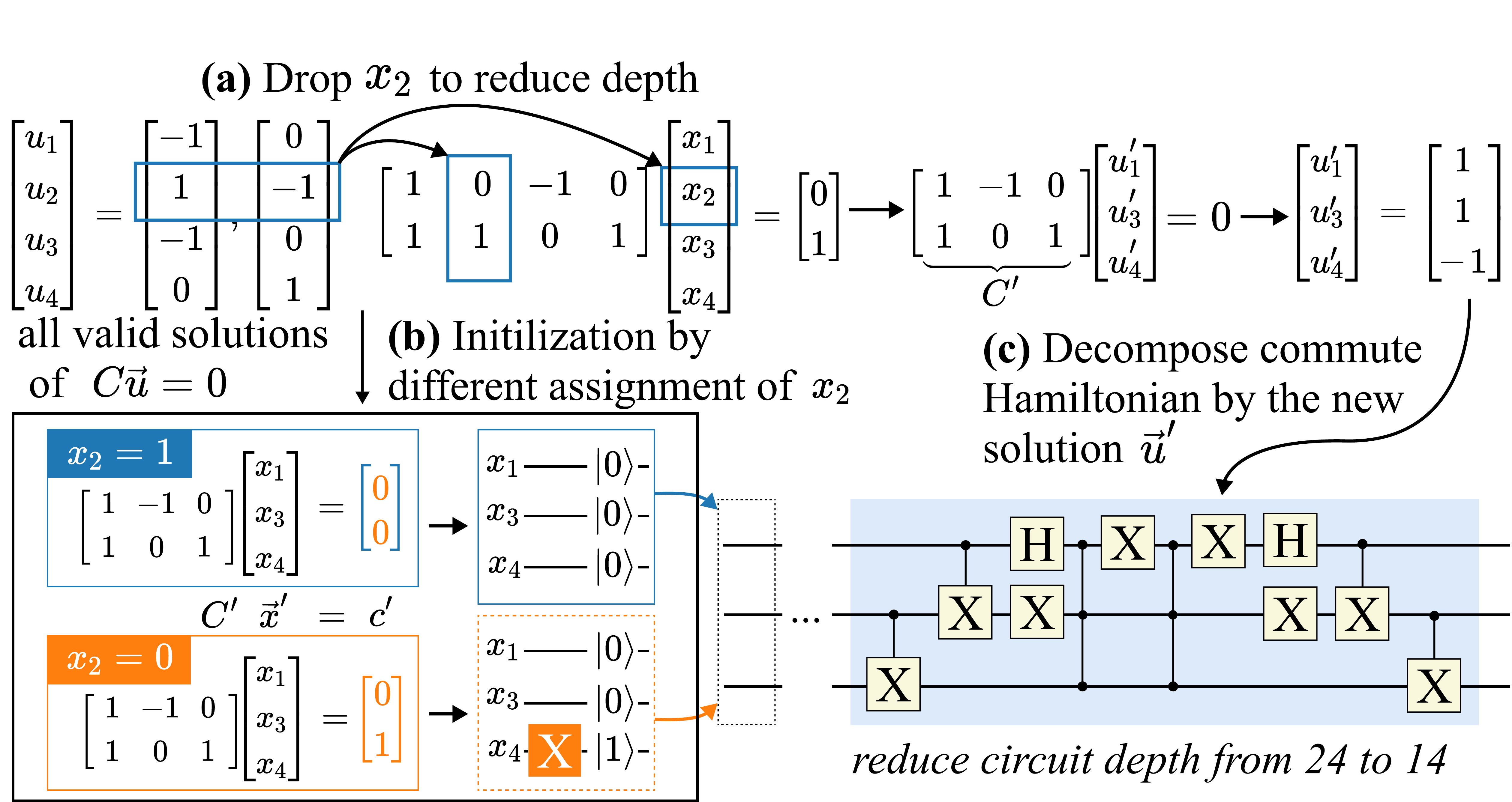}
    \vspace{-0.25cm}
    \caption{Eliminating the variable that involves most non-zero elements across all solutions of $C\vec{u}=0$, which brings higher deployability.}
    \label{fig:binaryhybrid}
    \vspace{-0.35cm}
\end{figure}

Unlike prior approximation-based unitary decomposition~\cite{tan2023quct,bocharov2015efficient,probabilisticunitarysynthesis}, we design a precise decomposition formulation for implementing the commute Hamiltonian with linear complexity. Instead of numerically calculating the Hamiltonian through tensor computations in Equation~\eqref{eq:hd}, we can directly derive the decomposed circuit from the solution vector $\vec{u}$. 
Following the example $H_c(\vec{u}^1)$ in Equation \eqref{eq:hdexample}, we can calculate the eigenstate according to Equation \eqref{eq:x_state}.
\begin{equation}
    \begin{aligned}
        & \vec{u}^{1}=[-1,1,-1,0] \\
    \rm Eq\;(12)\Rightarrow~~~   & |x^{\pm}\rangle = (|010\rangle \pm  |101\rangle)/\sqrt{2} \notag
    \end{aligned}
\end{equation}
Figure~\ref{fig:phasegatedecomp} provides the end-to-end decomposition flow for this Hamiltonian. The $G$ gates are constructed by two CX gates and an H gate. Since $q_2$ and $q_3$ are not in the same state, the first CX gate sets the qubit $q_3$ into $\ket{1}$. Similarly, the second CX gate turns qubit $q_2$ into $\ket{1}$. Then, the H gate on qubit $q_1$ changes the state $(\ket{0}+ \ket{1})\ket{11}/\sqrt{2}$ to $\ket{011}$ and changes the state $(\ket{0}- \ket{1})\ket{11}/\sqrt{2}$ to $\ket{111}$. The phase gate $X_1 P(-\beta) X_1$ puts $e^{-i\beta}$ to state $\ket{011}$, and $P(\beta)$ puts phase $e^{i\beta}$ to state $\ket{111}$. After applying the inversion gate $G^\dagger$, we successfully add phase $e^{i\mp \beta}$ on state $|x^{\pm}\rangle$, constructing the whole circuit for the Hamiltonian $H_c(\vec{u}^1)$. Note that we also draw the resultant circuit of the overall driver Hamiltonian in Figure \ref{fig:seqdriver}, illustrating the linear complexity of circuit depth.

\subsection{Variable Elimination for Higher Deployability}
\label{subsec: divide}

Though our decomposition method has greatly reduced the circuit depth, it is still intolerable for the current NISQ devices. For example, solving a facility location problem with 28 variables features a 989-depth circuit after Hamiltonian decomposition, which goes far beyond the executable depth ($\sim$100) in real-world quantum platforms. Besides, the limited number of qubits also aggravates the difficulty of deployment. An orthogonal optimization direction is to reduce the searching space, i.e., squeeze the constraint matrix by eliminating part of the variables.  

To this end, we further deeply investigate the relationship between circuit depth and the complexity of constraints. In Lemma 2, we have demonstrated that the circuit depth is linear to the number of qubits. Especially, the depth of each local Hamiltonian is linear to the non-zero elements in the solution $\vec{u}$. For example, in Figure \ref{fig:seqdriver} (b), $\vec{u}^1$ and $\vec{u}^2$ involves 3 and 2 non-zero elements, respectively. Thus, the circuit block of Hamiltonian $e^{-i\beta H_c(\vec{u}^1)}$ and $e^{-i\beta H_c(\vec{u}^2)}$ takes 3 and 2 qubits. \emph{In conclusion, the circuit depth of commute Hamiltonian simulation is proportional to the total number of nonzero elements in all solution vectors $\vec{u}\in \Delta$ of $C\vec{u}=0$.}

Based on this observation, we propose a variable elimination technique to reduce the number of variables, which further helps to reduce the dimensions of the solution vector $\vec{u}$. Figure~\ref{fig:binaryhybrid} shows an example that eliminates the variable $x_2$, reducing the solution vector to a triple $\vec{u'}=\{u_1',u_3',u_4'\}$. And the driver Hamiltonian is then derived from the new constraint matrix $C'\vec{u'}=0$. Accordingly, the number of non-zeros is reduced from 5 (3$+$2 in $\vec{u}^1$ and $\vec{u}^2$) to 3 ($\vec{u'}$). Consequently, the circuit depth and the required qubits are cut down, from 24-depth to 14-depth, and from 4 qubits to 3 qubits. However, such source saving introduces the overhead of more measurements, diminishing the quantum advantage. This approach requires enumerating the possible assignment of the eliminated variables and executing the circuit individually. For example, in Figure~\ref{fig:binaryhybrid}~(b), setting $x_2$ to 0 or 1 leads to different initialization when solving {$C'\Vec{x}'=c'$}. And we need to measure these two circuits to construct the final results. 

Theoretically, the outputs after eliminating variables still satisfy the original constraints. The constraint $\sum_{i=1}^{n} c_i x_i = c$ is transformed into $\sum_{i \neq j} c_i x_i = c - c_j x_j$ after eliminating variable $x_j$. Consequently, the commute Hamiltonian is generated from the new constraint $\sum_{i \neq j} c_i u_i = 0$ with the initial state $x^o$ set by the solution of the new constraints $\sum_{i \neq j} c_i x^0_i = c - c_j x_j$. This means $\sum_{i \neq j} c_i x'_i + c_j x_j = c$, which demonstrates that the results (including the evolved $x_{j}$) strictly satisfy the original constraints.

Actually, the measurement overhead increases exponentially to the number of eliminated variables. Such variable elimination is fundamentally an approach to make the circuit more deployable by sacrificing the theoretical parallelism of quantum computing. Therefore, we develop an identification method to find the variable that gives rise to a large reduction in the circuit depth. To be specific, we choose the variable that involves most non-zero elements across all solutions of $C\vec{u}=0$. For example, in Figure~\ref{fig:binaryhybrid}~(a), in the solution of $\vec{u}^1$ and $\vec{u}^2$, $x_2$ features two non-zero assignments, making it an eliminable variable. 
\section{Evaluation}
\label{sec:evaluation}
\subsection{Experiment Setup}
\textbf{Benchmark.} 
We evaluate \papername using three practical application scenarios, including facility location problem (FLP)~\cite{melo2009facilitylocation}, graph coloring problem (GCP)~\cite{graphcoloring}, and k-partition problem~\cite{kpartition}. Figure~\ref{fig:casestury} provides detailed problem formulation with their objective functions and constraints. For each application, we collect 400 cases from the related literature~\cite{melo2009facilitylocation,graphcoloring,kpartition}. And we categorize these cases into four problem scales (e.g., F1 to F4 in FLP), with the number of variables ranging from 6 to 28, and the number of constraints ranging from 3 to 16. 

\begin{table*}[t]
        \centering
        \footnotesize
        \caption{
        The circuit depth, success rate, and in-constraints rate of different QAOA designs under 12 benchmarks.
        }
        \vspace{-0.2cm}
        \label{tab:speed_noise_free}
        \renewcommand\arraystretch{1.05}
        \resizebox{\textwidth}{!}{
        \begin{threeparttable}
        
        \begin{tabular}{ll|llll|llll|llll|llll}
        \toprule 
        \multicolumn{2}{c|}{\multirow{3}{*}{\textbf{Benchmark}}}  & \multicolumn{4}{c|}{\textbf{Success rate (\%)}} & \multicolumn{4}{c|}{\textbf{In-constraints rate (\%)}} & \multicolumn{4}{c|}{\textbf{Approximation ratio gap (ARG)}} & \multicolumn{4}{c}{\textbf{Circuit depth}}\\
        \cline{3-18} 
        \rule{0pt}{13pt} 
         && \makecell[l]{\textbf{Penalty} \\ \textbf{~\cite{verma2022penalty}}} &  \makecell[l]{\textbf{Cyclic} \\ \textbf{~\cite{cyclicdriver}}} &  \makecell[l]{\textbf{HEA} \\ \textbf{~\cite{HEA}}} &  \makecell[l]{\textbf{Choco}\\\textbf{-Q}} 
         & \makecell[l]{\textbf{Penalty}\\\textbf{~\cite{verma2022penalty}}} &  \makecell[l]{\textbf{Cyclic} \\ \textbf{~\cite{cyclicdriver}}} &  \makecell[l]{\textbf{HEA} \\ \textbf{~\cite{HEA}}} & \makecell[l]{\textbf{Choco}\\\textbf{-Q}}
         & \makecell[l]{\textbf{Penalty} \\ \textbf{~\cite{verma2022penalty}}} &  \makecell[l]{\textbf{Cyclic} \\ \textbf{~\cite{cyclicdriver}}} &  \makecell[l]{\textbf{HEA} \\ \textbf{~\cite{HEA}}} & \makecell[l]{\textbf{Choco}\\\textbf{-Q}} 
          & \makecell[l]{\textbf{Penalty} \\ \textbf{~\cite{verma2022penalty}}} &  \makecell[l]{\textbf{Cyclic} \\ \textbf{~\cite{cyclicdriver}}} &  \makecell[l]{\textbf{HEA} \\ \textbf{~\cite{HEA}}} & \makecell[l]{\textbf{Choco}\\\textbf{-Q}} \\ 
        \hline
        
        \multirow{4}{*}{\makecell[l]{\textbf{FLP}\\~\cite{melo2009facilitylocation}}} 
         &\textbf{F1}&3.79&21.4&8.91&\textbf{99.8}                        &22.0 &38.7 &26.4 &\textbf{100.0}                & 39.0  & 15.3  & 44.9 & \textbf{0.16 }   & 56&91&42&43  \\
         &\textbf{F2}&0.14&0.09&\blackxmark &\textbf{54.0}                &0.47 &1.37 &0.10 &\textbf{100.0}                & 107 & 70.9  & 150 & \textbf{0.30}      & 88&99&98&221     \\
         &\textbf{F3} &\blackxmark&0.03&\blackxmark&\textbf{30.0}          &0.04 &0.79 &\blackxmark &\textbf{100.0}         & 145 & 94.6  & 181 & \textbf{0.50}      & 92&134&147&275    \\
         &\textbf{F4}&\blackxmark&\blackxmark&\blackxmark&\textbf{13.3}   &\blackxmark &0.74 &\blackxmark &\textbf{100.0}  & 177 & 97.7  & 257 & \textbf{0.43}      & 108&162&196&421   \\
        \hline
        
        \multirow{4}{*}{\makecell[l]{\textbf{GCP}\\~\cite{graphcoloring}}}
         &\textbf{G1}&0.15&4.74 &0.26&\textbf{69.7}             &0.50 &10.6 &0.36 &\textbf{100.0} & 698 & 217 & 1144 & \textbf{0.06} & 188&230&84&24    \\
         &\textbf{G2} &0.03&0.14&0.03&\textbf{67.1}               &0.07 &0.67 &0.03 &\textbf{100.0} & 747 & 621 & 1727 & \textbf{0.25} & 221&263&105&358  \\
         &\textbf{G3}  &\blackxmark&\blackxmark&\blackxmark&\textbf{17.1} &0.02 &0.27 &\blackxmark &\textbf{100.0}  & 2592 & 479 & 3091 & \textbf{0.47} & 654&708&168&586  \\
         &\textbf{G4}&\blackxmark&\blackxmark&\blackxmark&\textbf{9.50} &\blackxmark &\blackxmark &\blackxmark &\textbf{100.0} & 2366 & 501 & 3834 & \textbf{0.56}& 683&737&196&906   \\
        \hline
        
        \multirow{4}{*}{\makecell[l]{\textbf{KPP}\\~\cite{kpartition}}} 
         &\textbf{K1} &1.66&38.2&1.04&\textbf{86.1}                 &5.18 & 84.8 & 4.46 &\textbf{100.0} & 53.8  & 32.3  & 59.4  &  \textbf{0.14} & 115&150&56&79    \\
         &\textbf{K2}&0.01&14.2&\blackxmark&\textbf{52.6}           &0.05 & 39.7 & 0.04 &\textbf{100.0} & 118 & 37.8  & 130 &  \textbf{0.18} & 202&244&126&324  \\
         &\textbf{K3}&0.01&2.59&\blackxmark&\textbf{21.1}           &0.01 &31.6 &\blackxmark &\textbf{100.0} & 162 & 24.7  &  165 & \textbf{0.24} & 264&306&168&464   \\
         &\textbf{K4}&\blackxmark&0.45&\blackxmark&\textbf{13.3}  &\blackxmark &8.23 &\blackxmark &\textbf{100.0} & 163 & 30.5  & 170 &  \textbf{0.23}& 296&338&189&534    \\
        \hline
        
        \multicolumn{2}{c|}{\textbf{Improv.($\times$)}} & - & -& -& \textbf{$>$235} & - & -& -& \textbf{$>$80.6} & -  & -  & -   & \textbf{658}  & - & -& -&  \textbf{1.00}\\
        \bottomrule
        \end{tabular}
        \begin{tablenotes}
                \item[1] For Penalty-based QAOA~\cite{verma2022penalty}, we integrate it with two open-sourced optimization techniques, \textit{FrozenQubits}~\cite{frozenqubits} and \textit{Red-QAOA}~\cite{redqaoa}. \papername only eliminates one variable. 
                \item[2] The improvement is compared to the cyclic Hamiltonian-based method~\cite{cyclicdriver}. \blackxmark\xspace means that the design fails to find the optimal solution for this case.
        \end{tablenotes}    
        \end{threeparttable}
        }
        \vspace{-0.2cm}
        \end{table*}


\textbf{Comparison.} We compare \papername with previous QAOAs that support constrained binary optimization, including Penalty-based QAOA\cite{verma2022penalty}, cyclic Hamiltonian-based QAOA\cite{cyclicdriver}, and Hardware-efficient ansatz (HEA) \cite{HEA}. For Penalty-based QAOA, We integrate it with two state-of-the-art QAOA optimization techniques, \textit{FrozenQubits}~\cite{frozenqubits} and \textit{Red-QAOA}~\cite{redqaoa}. Specifically, \textit{FrozenQubits} aims to reduce the circuit depth and enhance the success rate, while \textit{Red-QAOA} optimizes initial parameters. HEA is a universal variational quantum algorithm (non-QAOA)~\cite{stein2022eqc}, where we use the circuit provided by Kandala et.al~\cite{HEA}. When designing HEA, we introduce a penalty method to make the output satisfy the constraints as much as possible. For parameter updating, we use the constrained optimization by linear approximation method~\cite{cobyla} for all designs.


\textbf{Platform.} 
We conduct several small-scale evaluations on three IBMQ systems, including \textit{Fez} platform with 159-qubit Heron r2 type, \textit{Sherbrooke} and \textit{Osaka} platform with 127-qubit Eagle r3 type~\cite{ibmq}. Since QAOA usually comprises a large amount of CZ gates, \textit{Fez} is QAOA-friendly as it features the CZ gate as the basic gate with 99.7\% fidelity. The other two devices only support single-direction ECR gates with 99.3\% fidelity, which takes three ECR gates to implement the CZ gate, resulting in a higher error rate. The simulation experiments and the classical part of QAOA are executed on an AMD EPYC 9554 64-core sever with 1.5T SSD memory. The simulation of the quantum circuit is accelerated by one A100 GPU on the server.



\textbf{Evaluation metrics.} 
Similar to prior constraint QAOAs, we employ three algorithmic metrics: \textit{success rate}, \textit{in-constraints rate}, and \textit{approximation ratio gap}. The \textit{success rate} is defined as the probability of getting the optimal solution after measurements. The \textit{in-constraints rate} is the probability that the output solutions satisfy the constraints. Thus, the in-constraints rate is always higher than the success rate. \textit{Approximation ratio gap} (ARG) indexes the solution quality across all outcomes generated by the algorithm, which is widely used in prior QAOAs \cite{frozenqubits,redqaoa,groveradaptivesearch}. It is defined as
\begin{equation}
    ARG =|\frac{E(f(\vec{x}) +\lambda\|C\vec{x}-\vec{c}\|)}{f(\vec{x_{optimal}})}-1|  \label{eq:arg}
\end{equation}
Here, the objective function $f$ and constraints $C$ are defined in Equation~\eqref {eq:cboexample}. $\vec{x}$ is the outcome, $\vec{x_{optimal}}$ is the optimal solution, and $E$ means the expectation value on all outcomes. $\lambda$ is the penalty term, which is set to 10 here to balance the objective and constraints.
When implementing on real-world devices, we further evaluate the end-to-end \textit{latency} (without data communication), including the compilation time for Hamiltonian decomposition, circuit execution time, and the parameter updating time for iterative optimization.

\vspace{-0.1cm}

\subsection{Algorithmic Evaluation}

\textbf{Success rate.} 
Compared to other QAOA designs, \papername achieves an average 235$\times$ improvement in successfully finding the optimal solution, as shown in Table~\ref{tab:speed_noise_free}. In particular, for the problem in medium-scale (F2, G2, K2), all other approaches nearly fail to get the optimal solution, with a success rate below 15\%. While, our approach exhibits a high success rate, from 52.6\% to 67.1\%. Even for large-scale problems, we still have a relatively high chance of finding the solution (9.50\% - 30.0\%). Such great improvement comes from the fact that the commute Hamiltonian effectively narrows down the search space, leading to a higher probability of getting the optimal solution.

\textbf{In-constraints rate.}
When encoding the constraints, \papername shows a 100\% in-constraints rate, indicating that our algorithm comprehensively takes all constraints into account. This is also the reason that \papername has a much higher success rate. The 100\% in-constraints rate benefits from the generality of our encoding scheme that applies commute Hamiltonian to thoroughly represent the constraints. In contrast, the cyclic Hamiltonian restricts constraints to a summation format, making it challenging to consider all constraints, especially in large-scale problems. Besides, the cyclic Hamiltonian performs better on KPP benchmarks since the constraints of KPP are in summation format and involve fewer shared variables.
The in-constraints rate of penalty-based QAOA~\cite{verma2022penalty} is highly determined by the number of constraints, where each constraint equation corresponds to one penalty term in the objective function. In contrast, \papername is irrelevant to it as we consider the problem as the commute operator between the driver Hamiltonian and the constraints. 

\textbf{Approximation ratio gap (ARG).}  As shown in Table~\ref{tab:speed_noise_free}, Choco-Q improves the ARG by 658$\times$. Such improvement originates from that in Choco-Q, $\lambda|C\vec{x}-\vec{c}|$ in Equation~\eqref{eq:arg} is constantly zero, while it can be huge using other methods. In particular, the ARG is larger in GCP benchmarks since GCP contains more complicated constraints. Even though, the ARG of Choco-Q is below 0.6.

\textbf{Circuit depth.}
\papername involves a little bit more circuit depth, notably in large-scale problems. First, as mentioned in Section \ref{subsec:phasegate}, prior algorithms are fundamentally approximation-based approaches for optimization problems, which may easily fail to meet the constraints. However, \papername lies on the precise encoding of arbitrary linear constraints, which requires implementing the commute Hamiltonian of all solution $\vec{u}$ in Equation \ref{eq:hd}, resulting in the linear complexity of circuit depth. Taking the G3 case as an example, it results in 12 $\vec{u}$ to precisely express the 12 constraint equations, giving rise to large circuit depth. The second reason is that we simulate the other QAOA algorithm only with seven repeated layers, i.e., repeat 7 times in Equation \eqref{eq:circuit}. We adopt seven layers because this setting achieves the best trade-off between success rate and circuit depth. We test that even if we increase the number of layers, it only contributes to limited improvement in the success rate. And in Table~\ref{tab:speed_noise_free}, we do not repeat the driver Hamiltonian of \papername (one layer). In a word, compared to the significant performance improvement, the additional circuit depth is acceptable.


\begin{figure}[t]
    \centering
    \includegraphics[width=.99\linewidth]{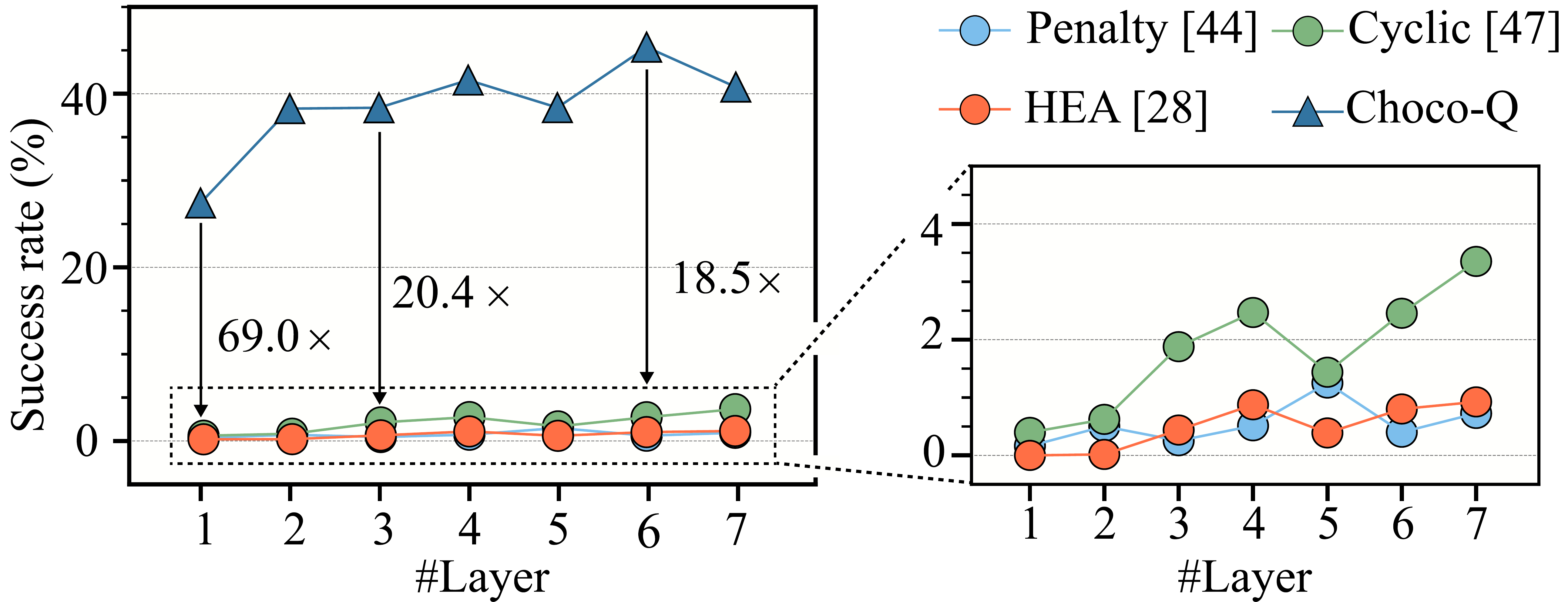}
    \vspace{-0.8cm}
    \caption{The average success rate using the different number of layers.}
    \label{fig:layer}
\end{figure}

\begin{figure}[t]
    \centering
    \includegraphics[width=.90\linewidth]{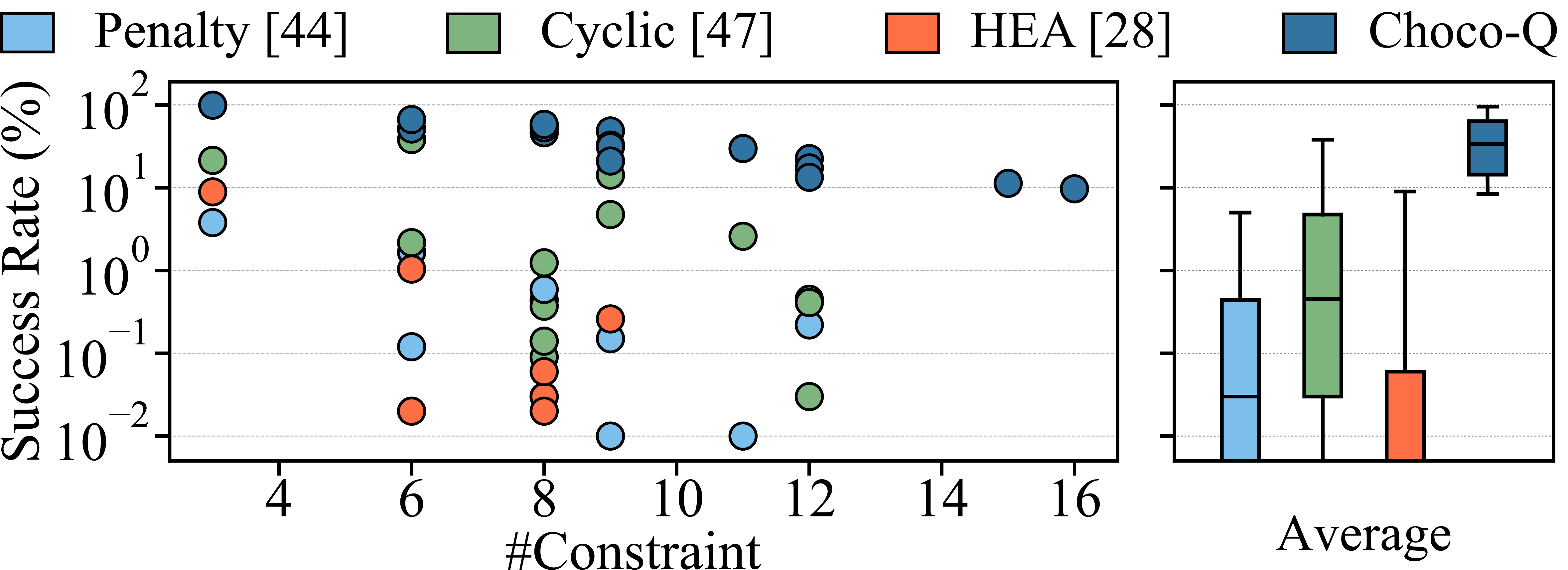}
    \vspace{-0.3cm}
    \caption{The success rate and circuit depth of Choco-Q under the different number of constraints.}
    \label{fig:cons}
    \vspace{-0.15cm}
\end{figure}

\begin{figure}[t]
    \centering
    \includegraphics[width=0.575\linewidth]{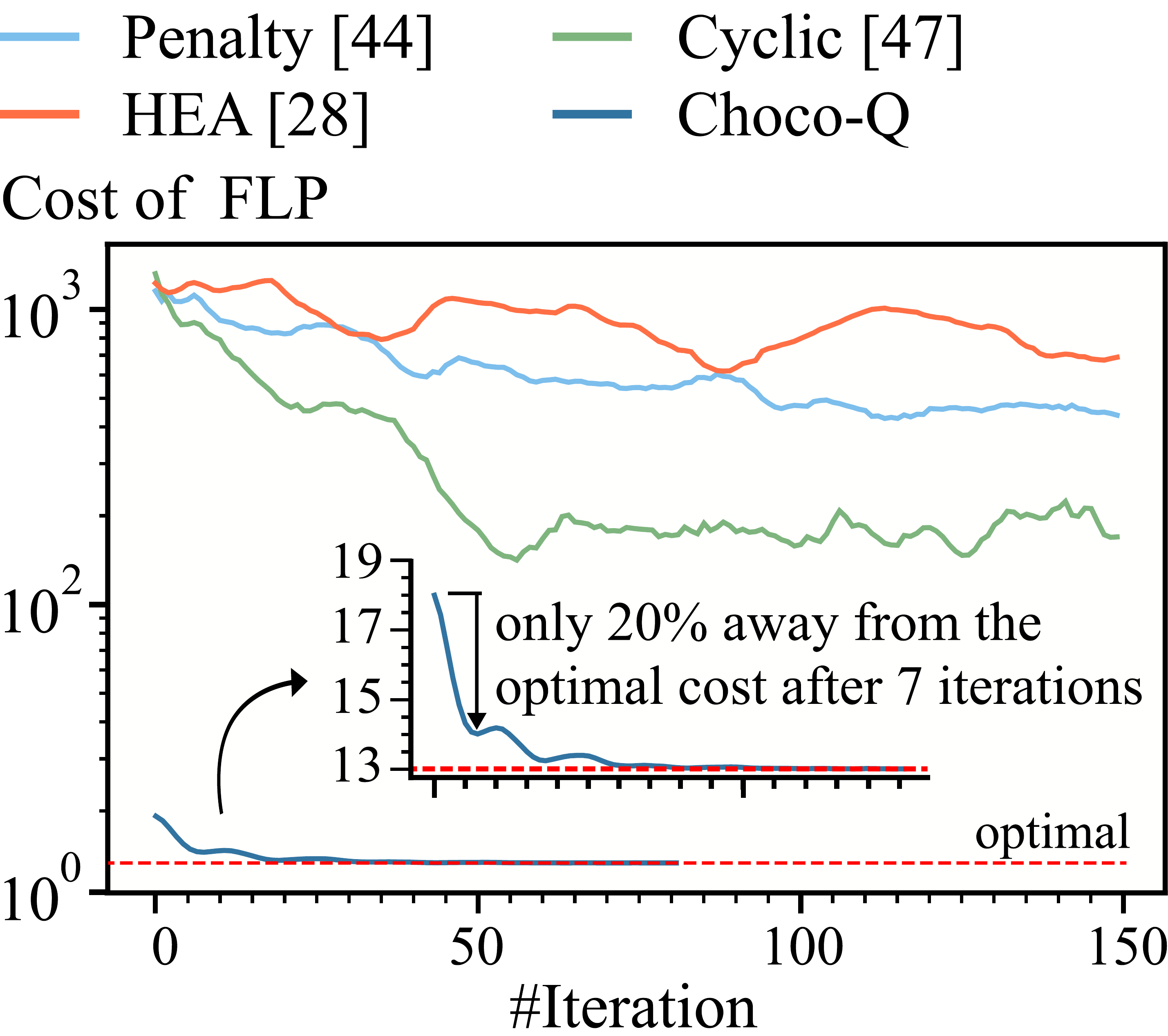}
    \hfill
    \includegraphics[width=0.38\linewidth]{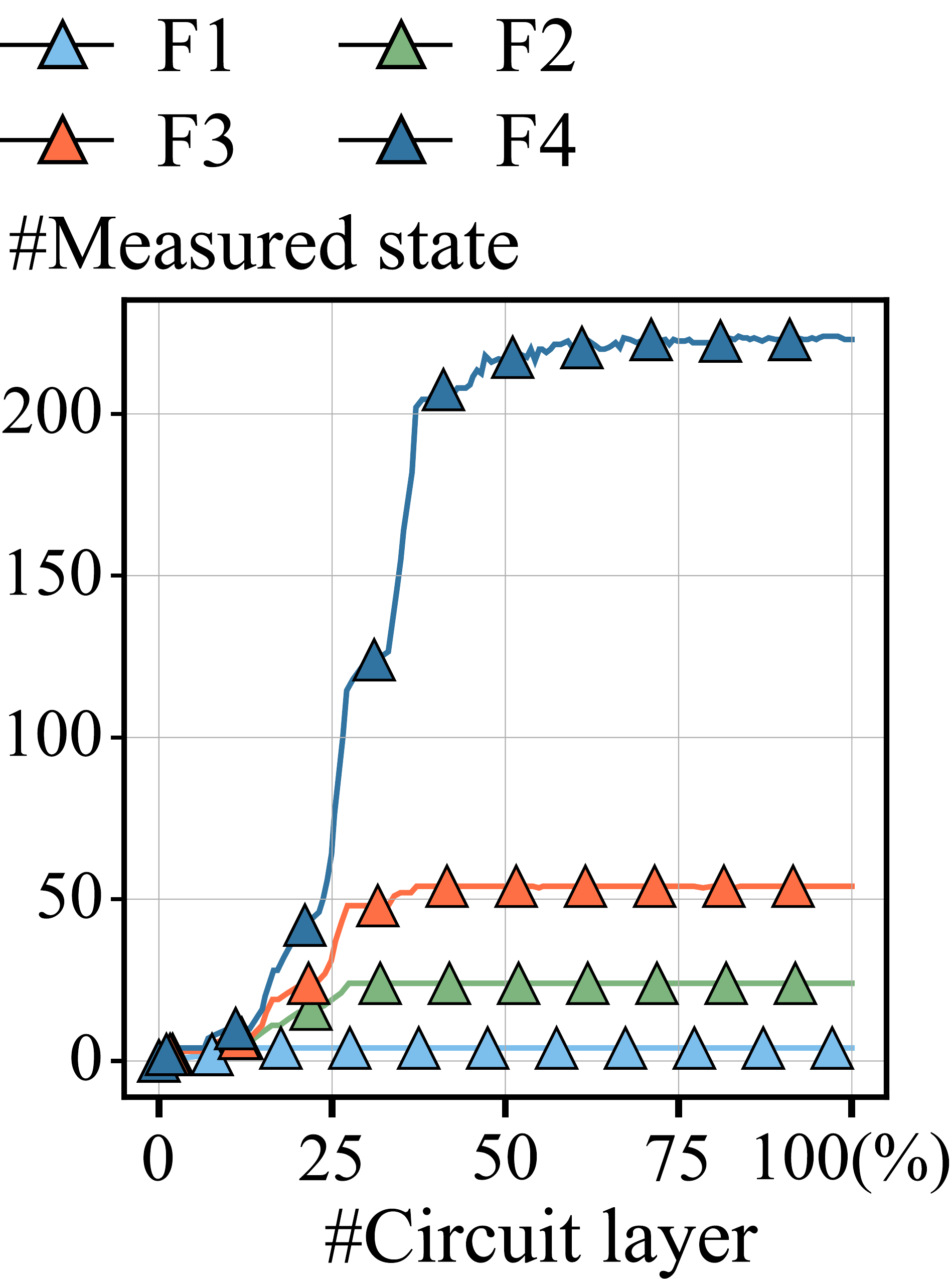}
     \\
      \vspace{-0.1cm}
    \makebox[0.54\linewidth][c]{\parbox{0.54\linewidth}{\footnotesize \centering (a) Convergence using F1:2F-1D.}}\hfill
    \makebox[0.42\linewidth][c]{\parbox{0.42\linewidth}{\footnotesize (b) The parallelism of Choco-Q.}}
    \\
    \vspace{-.25cm} 
    \caption{Convergence analysis of Choco-Q.}
    \label{fig: converge}
\end{figure}

\textbf{Success rate with the different number of repeated layers.} Figure~\ref{fig:layer} evaluates the average success rate by varying the number of repeated layers. We can see that the success rate of our design is always over 25\% while the others are less than 5\%. In \papername, applying two repeated commute Hamiltonian layers improves the success rate from 27.4\% to 38.3\%. Since the serialization of commute Hamiltonian has covered all searching directions, more layers can only bring limited improvement. On the contrary, increasing the number of layers significantly improves the success rate of other designs. For example, in the cyclic Hamiltonian-based QAOA~\cite{cyclicdriver}, each additional layer leads to an average of 0.5\% improvement. Even though, the final success rate is still hindered by the algorithmic limitations. 

\textbf{Success rate with the different number of constraints.}  
Figure~\ref{fig:cons} plots the success rate of all graph benchmarks with the number of constraints on the x-axis. The advantage of Choco-Q becomes more significant with the growing number of constraints. In particular, when the number of constraints exceeds 12, the success rate of other methods almost becomes zero, while Choco-Q still holds a success rate of more than 10\%. This achievement arises because the commute Hamiltonian strictly limits the search space under constraints.

\textbf{Convergence.}
Figure~\ref{fig: converge}~(a) depicts the convergence curves of different QAOA designs, picking one case from the F1:2F-1D benchmark. 
Choco-Q can reach the optimal cost within 30 iterations, while other methods require over 148 iterations and still have at least a 78\% gap with the optimal cost. In particular, Choco-Q can quickly reach 20\% away from the optimal cost within 7 iterations, which is $7.3\times$ faster than the cyclic Hamiltonian-based approach. Such high convergence speedup is attributed to a good initial cost (18), while the others are extremely large ($10^3$). Furthermore, the squeezed search space also helps to find the optimal solution in fewer iterations. Figure~\ref{fig: converge}~(b) illustrates the parallelism of Choco-Q. We collect the number of measured states through the circuit, which represents the parallelism involved in the superposition of quantum states. Choco-Q shows an exponential growth of the parallelism at the beginning of the circuit, i.e., at the point of around 1/4 circuit. Unlike prior QAOAs that initialize a uniform superposition state, even though Choco-Q prepares a special initial state, we still effectively harvest the quantum parallelism by applying commute Hamiltonian.

\subsection{Evaluation on Real-world Quantum Platforms.}
This section evaluates Choco-Q on the IBM cloud quantum devices. Considering the short decoherence time and the inevitable noise error, we choose to implement the small-scale problems, i.e., F1, G1, and K1 cases in Table~\ref{tab:speed_noise_free}.

\begin{figure}[t]
    \centering
    \includegraphics[width=.99\linewidth]{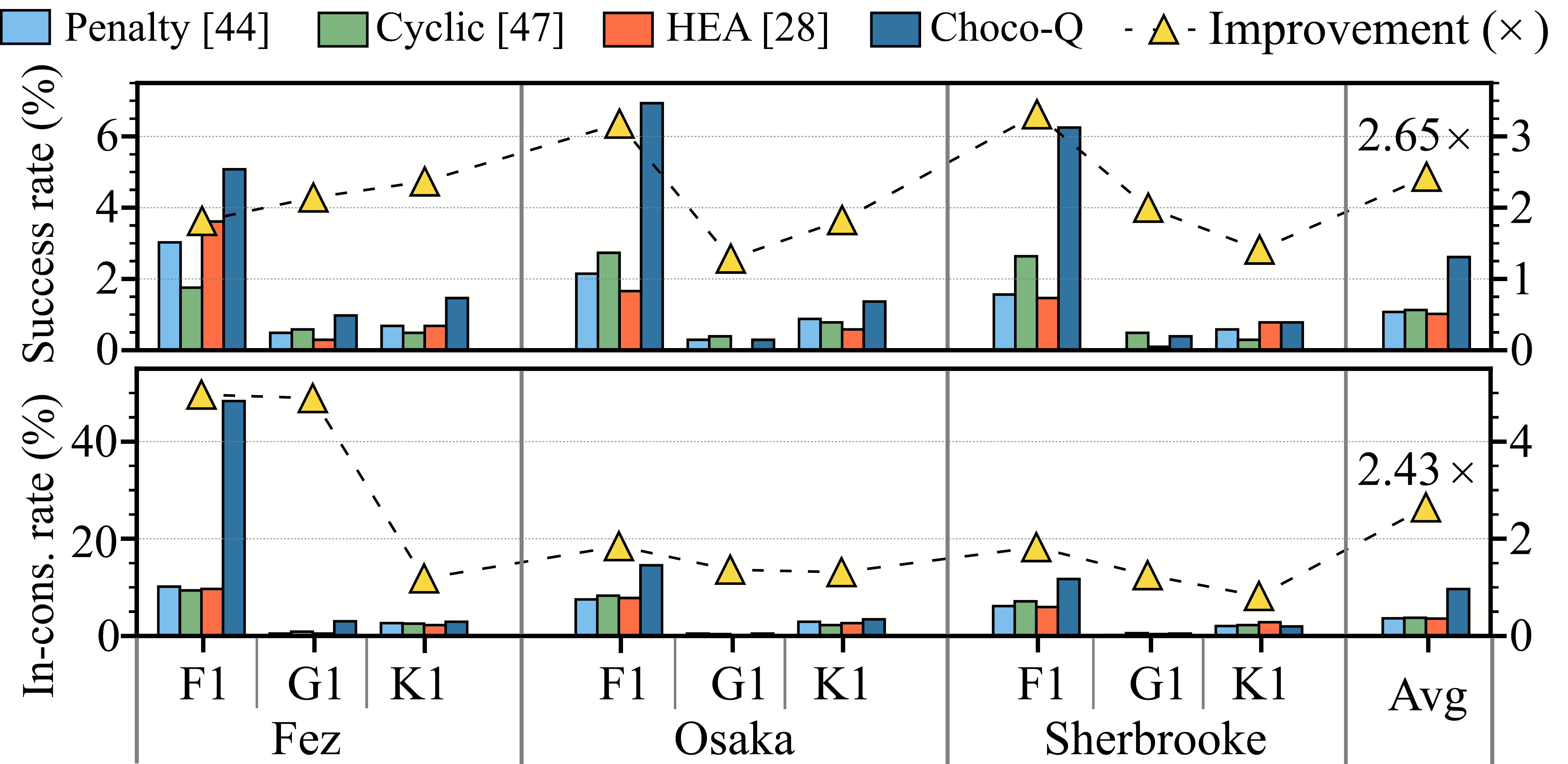}
    \vspace{-0.7cm}
    \caption{Success rate and in-constraints rate on IBM quantum platforms.}
    \label{fig:realqc}
\end{figure}

\textbf{Success rate.} Figure~\ref{fig:realqc} gives the success rate on three quantum devices. Compared to the noise-free simulator, the success rate of all cases is decreased due to various noise, e.g., cross-talk and hardware defects. Overall, Choco-Q achieves an average of $2.65\times$ improvement over other designs. 
The effectiveness of NISQ devices arises from our decomposition technique that transforms the commute Hamiltonian into multiple phase gates, which are less sensitive to flip error. 
We observe that all methods on the G1 benchmark have the lowest success rate. This is because G1 requires 12 qubits to solve and involves more cross-talk errors. The circuits of F1 have a higher success rate since they only consist of six variables and three constraints.


\textbf{In-constraints rate.} Compared to other designs, Choco-Q shows $2.43\times$ higher in-constraints rate. In particular, on the Fez platform, we achieve up to 48\% in-constraints rate, resulting in $4.98\times$ improvement, thanks to its architectural properties. As mentioned before, this device is QAOA-friendly, allowing us to fully leverage our technological advantages. As we serialize the driver Hamiltonian into a set of smaller ones that necessitates fewer qubit connections, Choco-Q is more noise-tolerant against the cross-talk noise.


\begin{figure}[t]
    \centering
    \includegraphics[width=0.95\linewidth]{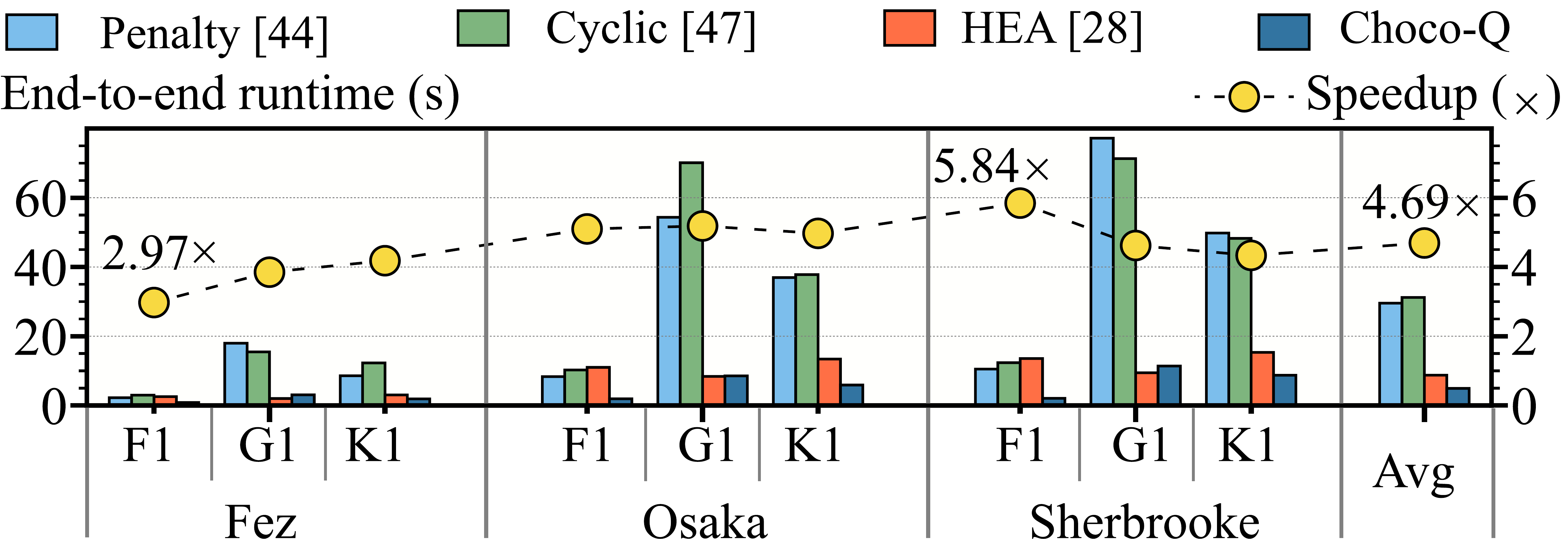}
     \\
     \vspace{-0.1cm} 
    \makebox[0.95\linewidth]{\footnotesize (a) End-to-end latency comparison on the real-world quantum platforms.}
    \vspace{0.15cm} 
    \\
    \includegraphics[width=.98\linewidth]{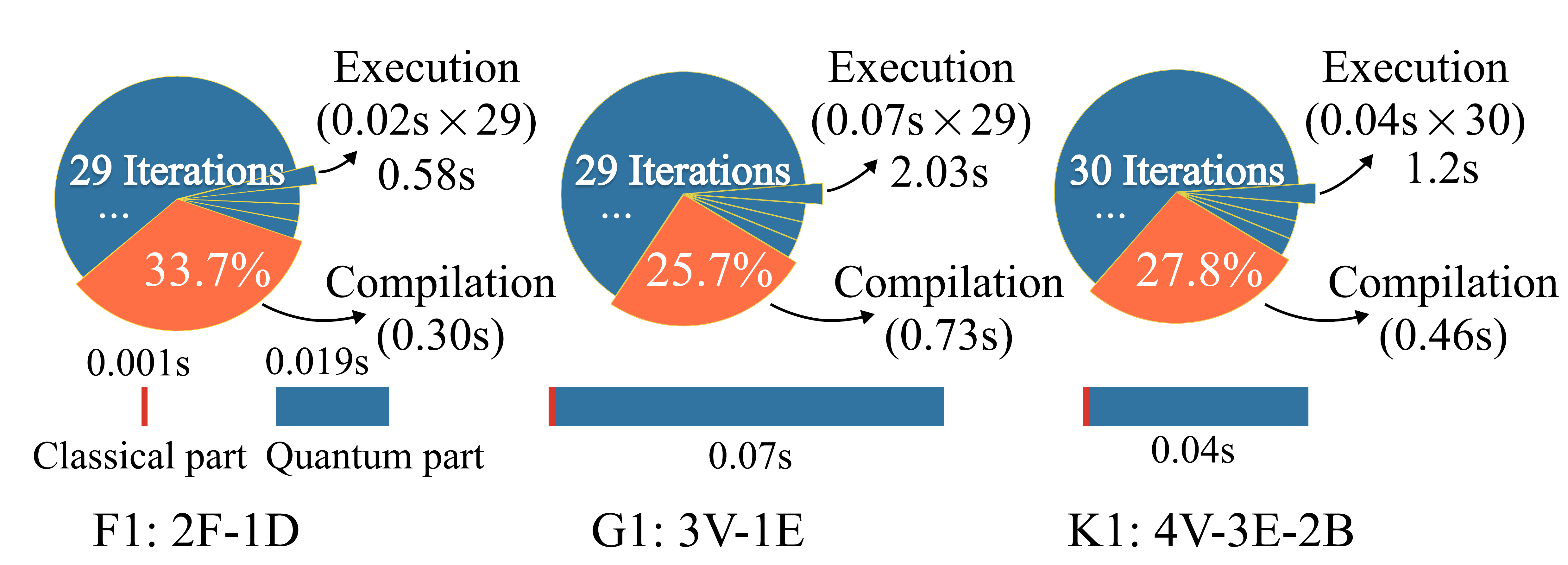}
    \\
    \vspace{-0.1cm} 
    \makebox[0.95\linewidth]{\footnotesize (b) Latency breakdown of Choco-Q on the Fez platform.}
    \\
    \vspace{-0.2cm} 
    \caption{End-to-end latency evaluation and breakdown.}
    \label{fig:runtime}
      \vspace{-0.2cm}
\end{figure}
\textbf{End-to-end latency.} 
Figure~\ref{fig:runtime}~(a) compares the overall latency comprising of the compilation time, circuit execution time, and parameter updating time. Choco-Q achieves $2.97\times$ - $5.84\times$ speedup, and always within 10 seconds, primarily attributed to the reduced number of iterations. The latency of HEA~\cite{HEA}, the non-QAOA algorithm, is close to ours in several cases. This is because it shows shallow circuit depth in F1, G1, and K1 cases, leading to around $4\times$ speedup when executing the circuit. 

Figure~\ref{fig:runtime}~(b) presents the latency breakdown for the cases in the Fez platform. We consider the end-to-end latency as the summation of compilation time (including the decomposition) and execution time. The execution time includes the classical and quantum parts of QAOA, with the classical part only taking a little time. We can see that the iterative execution is the most time-consuming, requiring around 30 iterations and accounting for around 70\% of the total latency.

\subsection{Detailed Analysis of Optimization Techniques} \label{subsec:opteval}

\textbf{Decomposition of commute Hamiltonian.}
Figure~\ref{fig:decompose} compares the Hamiltonian decomposition results between Trotter decomposition~\cite{trotterbasedsimulation} and Choco-Q. Figure~\ref{fig:decompose}~(a) shows the decomposition time and memory usage across different circuit sizes. By putting our serialization technique and decomposition formulation together, our approach demonstrates a linear time complexity (less than 0.1s) and constant memory usage (less than 10MB). Taking the 10-qubit Hamiltonian as an example, we achieve $10^6\times$, $8341\times$ reduction for the decomposition time and memory usage, respectively. When the number of qubits is beyond 10, the Trotter decomposition fails to output the results. The high scalability mainly stems from that we effectively eliminate the massive tensor computation in Equation \eqref{eq:hd} by serialization and equivalent transformation.

\begin{figure}[t]
    \centering
    \includegraphics[width=0.5225\linewidth]{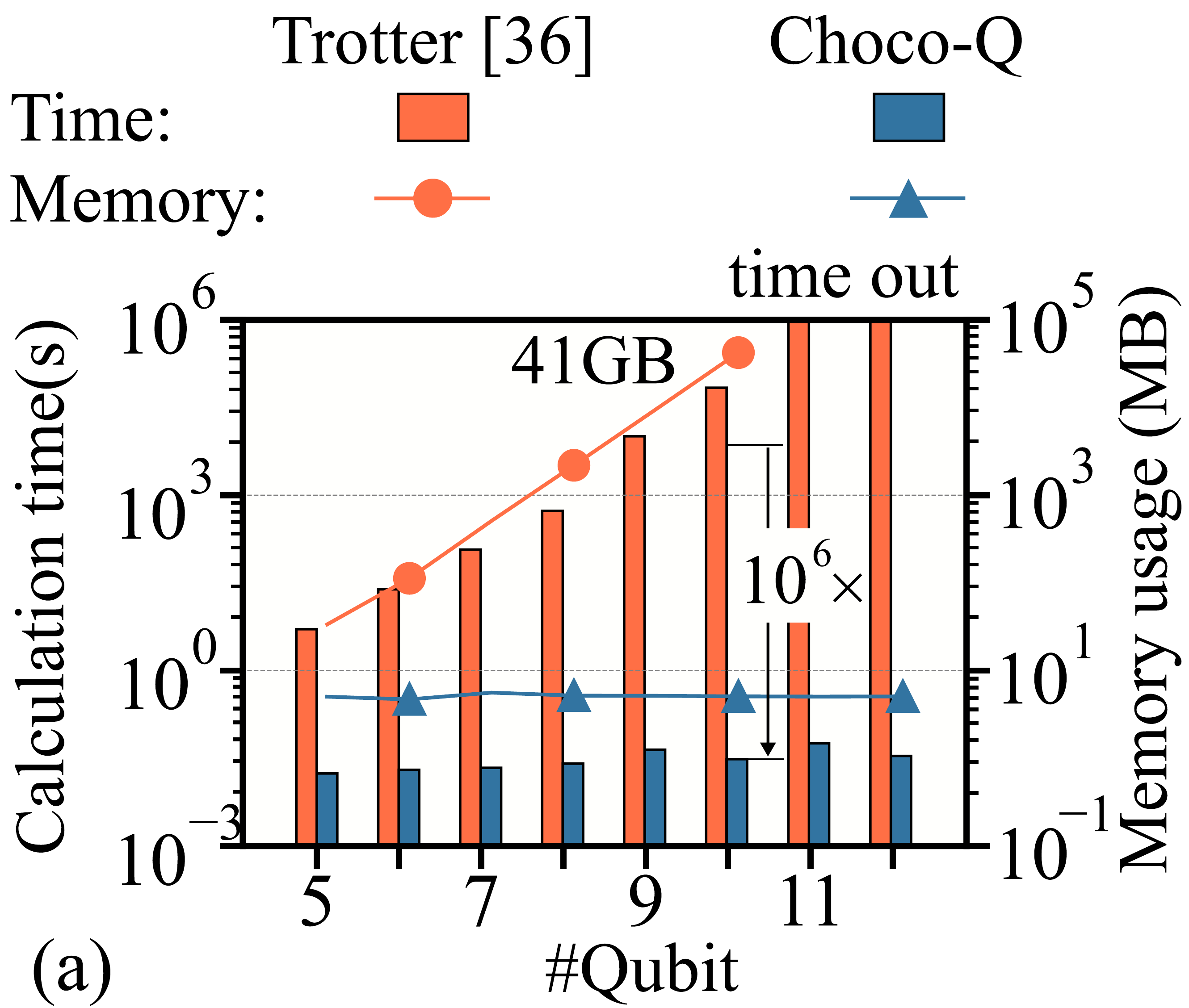}\hfill
    \includegraphics[width=0.4275\linewidth]{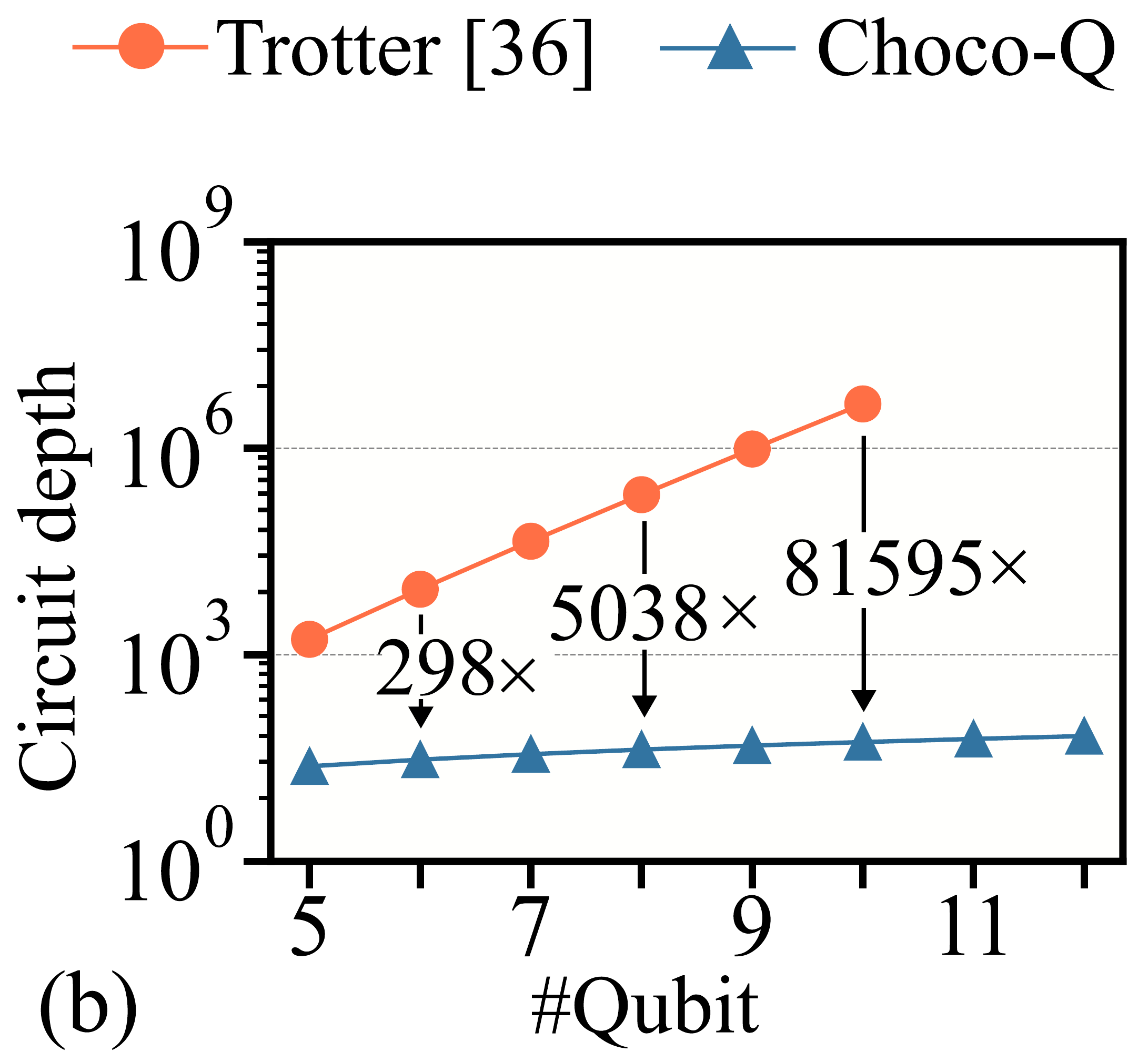}
     \\
    \vspace{-0.3cm}
    \caption{Comparison between Trotter decomposition~\cite{trotterbasedsimulation} and Choco-Q.}
    \label{fig:decompose}
\end{figure}

Figure~\ref{fig:decompose}~(b) compares the resultant circuit depth using these two decomposition methods. The circuit depth of Choco-Q is linearly increased with the number of qubits, from 24 (5-qubit) to 66 (12-qubit). Such quantum resource saving benefits from our serialization of commute Hamiltonian, which avoids enormous repetition of approximation Hamiltonian. In addition, our decomposition flow in Figure \ref{fig:phasegatedecomp} also ensures that the convert gate and the phase gate can be efficiently decomposed into basic gates.


\begin{figure}[t]
    \centering
    \includegraphics[width=0.51\linewidth]{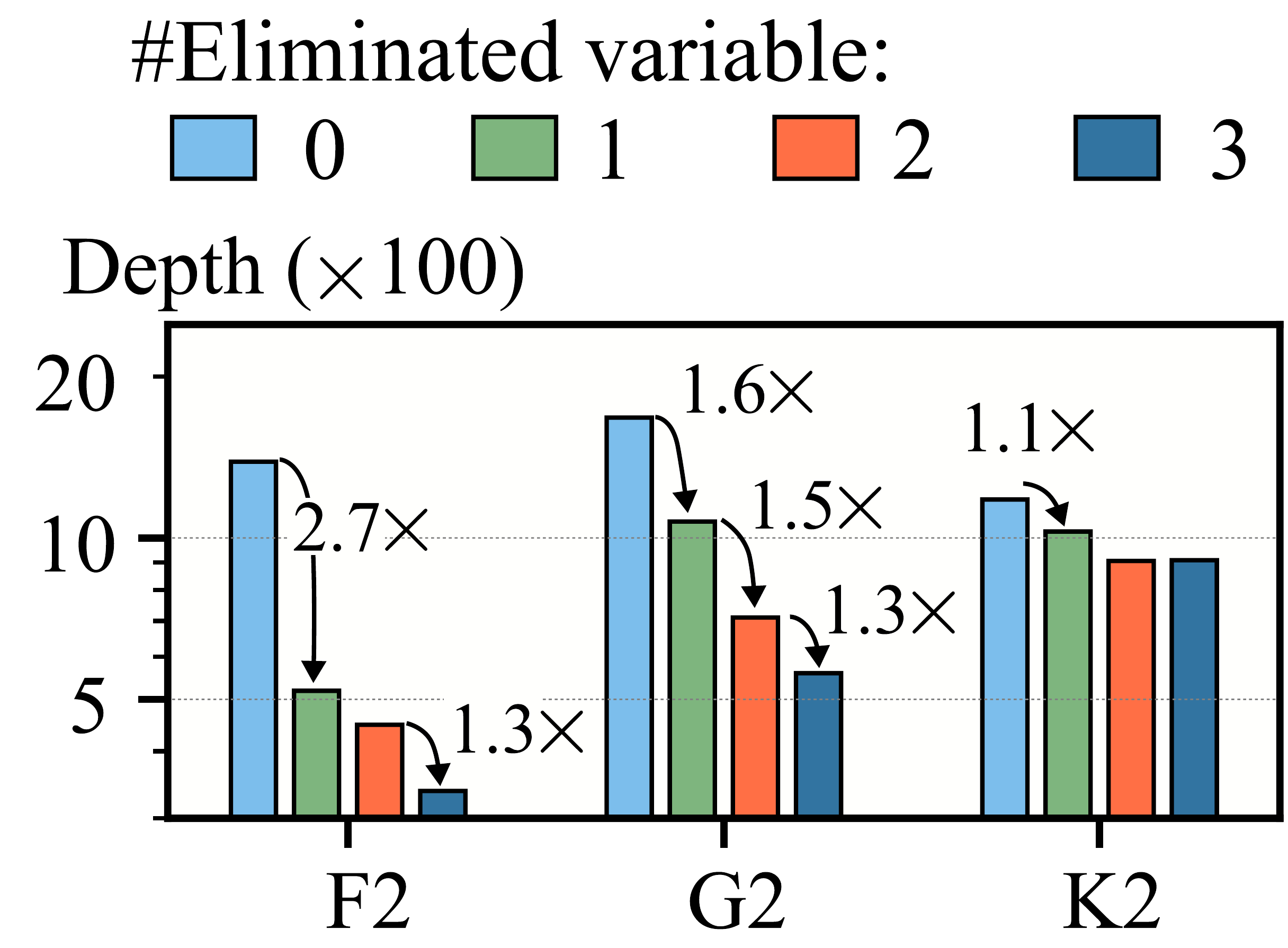}\hfill
    \includegraphics[width=0.44\linewidth]{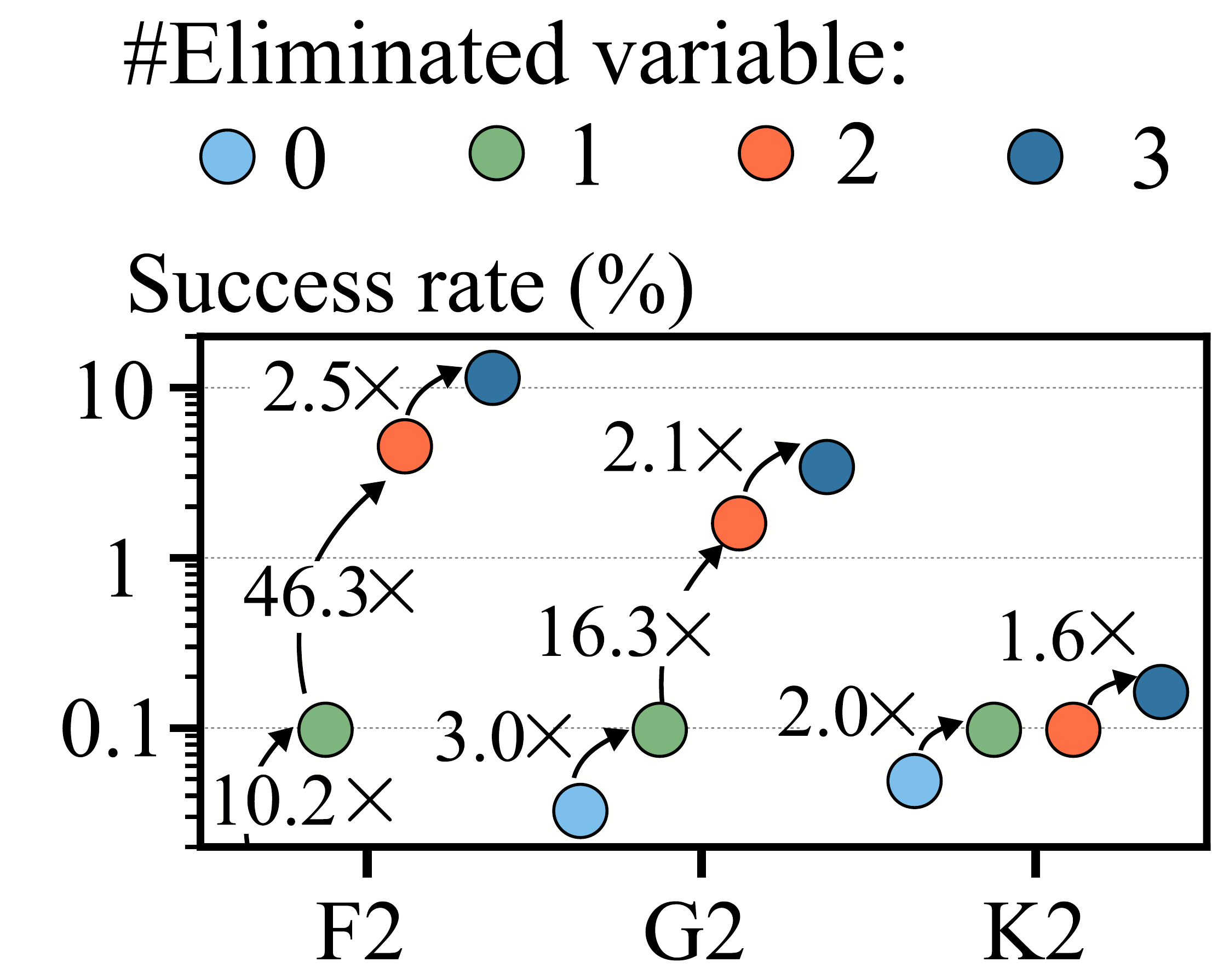}
    \\
    \makebox[0.54\linewidth][c]{\parbox{0.54\linewidth}{\footnotesize  (a) Depth of circuit after eliminating
    the different number of variables.}}\hfill
    \makebox[0.40\linewidth][c]{\parbox{0.40\linewidth}{\footnotesize (b) Success rate improvement\\ under IBMQ noise model.}}
    \\
    \vspace{-0.1cm}
    \caption{Evaluation of variable elimination technique.}
    \label{fig:eliminate}
    \vspace{-0.2cm}
\end{figure}

\textbf{Variable elimination.} We analyze the circuit depth and the success rate after eliminating a different number of variables, as shown in Figure~\ref{fig:eliminate}. Overall, variable elimination produces a remarkable boost in both circuit depth and success rate. For example, in F2 case, just eliminating one variable leads to a 2.7$\times$ circuit depth reduction and yields 10.2$\times$ success rate improvement, which exceeds the overhead caused by extra measurements. We also find that such yields slow down when more eliminated variables are introduced. This is because most of the non-zero values in solutions of $C\vec{u}=0$ have been eliminated and further elimination leads to less benefit. Besides, this technique has little gain in KPP cases due to the uniform distribution of qubits in the local Hamiltonian. Thus, as shown in Figure \ref{fig:binaryhybrid}, dropping one of them cannot greatly reduce the number of non-zeros in all solutions of $C\vec{u}=0$. 


Moreover, to demonstrate the advantage of variable elimination, we investigate the success rate under the noise model of the three IBM quantum platforms. We can see that the success rate benefits most when eliminating the first two variables. For example, in F2, the success rate is increased by $10.2\times$ and $46.3\times$, respectively, leading to the success rate improved from 0.1\% to 4.6\%. Interestingly, the elimination of the 3rd variable generates limited improvement, from $1.6\times$ to $2.5\times$, which is mainly because most of the non-zero values have been eliminated. Even though this introduces $2\times$ overhead of more measurements, it is acceptable, especially considering the deployability on current NISQ devices.


\subsection{Ablation Experiments}
\begin{figure}[t]
    \centering
    \includegraphics[width=.98\linewidth]{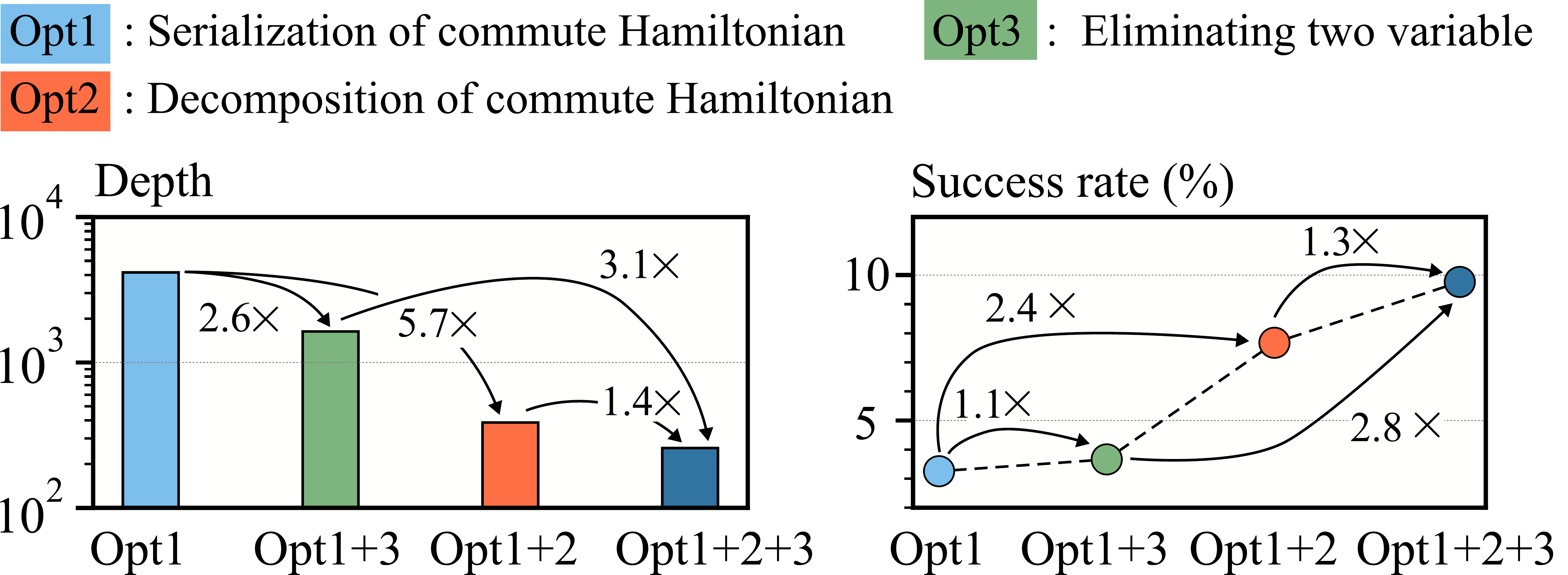}
    \\
    \makebox[0.50\linewidth]{\parbox{0.50\linewidth}{\footnotesize \centering  (a) Ablation study of circuit depth.}}\hfill
    \makebox[0.45\linewidth][c]{\parbox{0.45\linewidth}{\footnotesize (b) Ablation study of success rate.}}
    \\
    \vspace{-0.25cm}
    \caption{Ablation study under IBMQ noise model.}
    \label{fig: ablation}
    \vspace{-0.1cm}
\end{figure}
Figure~\ref{fig: ablation} gives the circuit depth and success rate under different configurations of our proposed optimization techniques. The metrics including depth and success rate are average over the benchmarks and the quantum devices. We have illustrated in Section~\ref{subsec:opteval} that the circuit without serialization of commute Hamiltonian (Opt1) cannot be executed on real-world devices, so we use opt1 in each configuration. Figure~\ref{fig: ablation}~(a) shows that eliminating one variable (Opt3) can reduce circuit depth by $2.6\times$(Opt1+3). Decomposing the local Hamiltonian by phase gate (Opt2) can reduce the circuit depth by $5.7\times$ (Opt1+2) compared to directly decomposing the local Hamiltonian unitary. The circuit depth can be further reduced by $1.4\times$ using variable elimination. The success rate improvement shows a similar trend with the circuit depth as shown in Figure~\ref{fig: ablation}~(b). Decomposition of commute Hamiltonian can improve success rate by $2.4\times$ (Opt1+2). Variable elimination can further increase success rate by $1.3\times$ (Opt1+2+3).

\section{Related work}

\subsection{Quantum Algorithm for Constrained Binary Optimization}

The first quantum approach to this problem is quantum annealing~\cite{quantumannealing}, which uses the quantum adiabatic theorem to solve unconstrained binary optimization. This approach suffers from long embedding time and long evolution time~\cite{lowquantumannealing}. To overcome this limit, QAOA provides a gate-model version of quantum annealing and uses a variation quantum algorithm to shorten evolution time. Since QAOA can be deployed on current noisy quantum devices, it has been studied and optimized from different aspects, such as compilation~\cite{qaoacompilation}, parameter initialization~\cite{redqaoa}, parameter updating~\cite{streif2020training}, and distributed version~\cite{frozenqubits}. However, quantum annealing and QAOA cannot deal with constraints. Although it can be extended by penalty term~\cite{verma2022penalty}, cyclic Hamiltonian~\cite{cyclicdriver}, they suffer from low success rates and long latency. This paper gives a confident and fast solution by integrating the commute Hamiltonian into QAOA and developing an efficient compilation flow.

Some universal quantum algorithms, alternating from quantum annealing and QAOA, are used to solve this problem with specialized modification. One is the hardware efficient ansatz (HEA)~\cite{HEA,dangwal2023varsaw}. Each variable is encoded by one qubit, and the objective is modified like penalty-QAOA.
Although hardware-efficient, this method cannot always converge into an optimal solution since the circuit structure is not specialized. Another approach named Grover adaptive search uses the Grover algorithm~\cite{long2001grover}, tailored with a selection circuit to exclude the solution outside the constraints~\cite{groveradaptivesearch}. However, the selection circuit is too complex to deploy on hardware. The search process generates too many solutions outside the constraints, requiring enormous iteration to find the target solution. Compared with these works, we use commute Hamiltonian to restrict the search space and multiple optimization strategies to reduce entanglement complexity and circuit depth.

\subsection{Unitary Decomposition for Hamiltonian Simulation}

Hamiltonian simulation simulates the dynamics of a quantum system by decomposing the unitary into a quantum circuit composed of single and two-qubit gates. The most established method is Trotter-Suzuki decomposition~\cite{trotterbasedsimulation}, which uses numerous layers to decompose the total Hamiltonian unitary by repeating multiple local unitary on smaller subsystems. This method can obtain high-precision approximation but with a deep circuit. Other methods, like a linear combination~\cite{hamiltoniansimulationlinear} and quantum signal processing~\cite{hamiltoniansimulationbyqsp}, use a quantum circuit to encode the sparse unitary for simulation. They can approximate Hamiltonian simulation with a shallow circuit but consume extra large ancillary qubit resources. Compared with these works, we provide a precise and efficient method for simulating commute Hamiltonian. We leverage the eigenspace feature of commute Hamiltonian and develop a precise decomposition with linear complexity and linear circuit depth.

\section{Conclusion}
This paper proposes \papername, a fast and hardware-efficient approach for constrained binary optimization problems, which applies commute Hamiltonian that supports arbitrary linear constraints. Then, we develop three optimization techniques to reduce the circuit depth, including Hamiltonian serialization, equivalent decomposition, and variable elimination. These three techniques work together, shrinking the circuit depth from tens of thousands to hundreds. In conclusion, our design greatly improves the success rate and reduces the overall latency of QAOA.

\section{acknowledgements}
This work was supported by National Natural Science Foundation of China (No.62472374) and the National Key Research and Development Program of
China (No. 2023YFF0905200). This work was also funded by Zhejiang Pioneer (Jianbing) Project (No. 2023C01036).


\section{Artifact Appendix}

\subsection{Abstract}
In this section, we provide detailed information that will facilitate the artifact evaluation process. The artifact checklist section presents brief information about this artifact and outlines the basic requirements to reproduce the experiment results. Then, we describe the directory tree of our source code and go into more detail about the requirements. Finally, in the experiment workflow section, we provide the usage example and how to reproduce the experiments.

\subsection{Artifact check-list (meta-information)}


\begin{itemize}
  \item {\bf Algorithm:} \papername, a formal and universal framework for constrained binary optimization problems. \papername uses the simulation of commute Hamiltonian to embed constraints and achieve a 100\% in-constraints rate to ensure high accuracy. 
  \item {\bf Program: } \papername is implemented in Python.
  \item {\bf Data set:} We use problems from 3 domains to evaluate \papername, including facility location, graph coloring, and k-partition. We obtained specific forms of the objective function and constraints from the relevant literature~\cite{melo2009facilitylocation,graphcoloring,kpartition}, and faithfully constructed the relevant dataset in the framework of Choco-Q.
  \item {\bf Run-time environment:}  Ubuntu 20.04 LTS.
  \item {\bf Hardware: }AMD EPYC 9554 64-core CPU, NVIDIA A100 Tensor Core GPU, 64GB Memory, 1TB of storage space.
  \item {\bf Metrics: } In-constraints rate and success rate.
  \item {\bf Results: } The results of experiments are shown in the output of the cells in the .ipynb file and are also written into the .csv file.
  \item {\bf Experiments: } The jupyter notebooks are provided in \texttt{implimentations/} to reproduce the results of the experiments.
  \item {\bf How much disk space is required (approximately)?:} 1TB.
  \item {\bf How much time is needed to prepare workflow (approximately)?:} Less than ten minutes.
  \item {\bf How much time is needed to complete experiments (approximately)?:} Less than two hours. For solving problems with higher problem scales, a GPU simulator will speed up the quantum circuit simulation, making the run time less than one hour. When using the CPU simulator, the quantum circuit simulation needs two hours to be completed.
  \item {\bf Publicly available?:} Yes.
  \item {\bf Code licenses (if publicly available)?:} GNU GPLv3.
  \item {\bf Archived (provide DOI)?:} \url{https://doi.org/10.5281/zenodo.14250941}
\end{itemize}

\subsection{Description}

\papername is built on Python scripts, which can be executed on Linux systems. Below, we introduce the important files and directories in the artifact. 

\begin{forest}
  for tree={
    font=\ttfamily,
    grow'=0,
    child anchor=west,
    parent anchor=south,
    node distance=2cm, 
    anchor=west,
    calign=first,
    inner xsep=7pt,
    rounded corners,
    font=\small,
    draw=black, fill=cyan!20,
    edge path={
      \noexpand\path [draw, \forestoption{edge}]
      (!u.south west) +(7.5pt,0) |- (.child anchor) pic {folder} \forestoption{edge label};
    },
    before typesetting nodes={
      if n=1
        {insert before={[,phantom]}}
        {}
    },
    fit=band,
    before computing xy={l=15pt},
  }  
[\papername Artifact
  [chocoq/, label={right: {\small core
code of \papername}}
      [model/,  label={right: {\small constrained binary optimization formulation}}
        ]
        [problems/, label={right: {\small dataset of problems}}
        ]
        [solvers/, label={right: {\small code of quantum solvers}}
        ]
  ]
  [implementations/, label={right: {\small jupyter notebooks for experiments}}
    [0\_test.ipynb, label={right: {\small usage of \papername solver}}
    ]
    [1\_table.ipynb, label={right: {\small implimentation of Table~\ref{tab:speed_noise_free}}}
    ]
  ]
  [
  testbed\_cpu.py, label={right: {\small test the installation of CPU version}}
  ]
  [
  testbed\_gpu.py, label={right: {\small test the installation of GPU version}}
  ]
]
\end{forest}
\begin{itemize}
    
    \item {\bf chocoq/.} This sub-directory contains the core code of \papername, which includes the constrained binary optimization model (\texttt{model/}), the problem dataset (\texttt{problems/}), the implementation of \papername solver and other baseline solvers (\texttt{solvers/}). 
    \item {\bf implimentations/.} This sub-directory includes the Jupyter Notebooks to reproduce the experiments in the paper. 
    \item {\bf testbed\_cpu.py. } This Python script gives an example using \papername to define the optimization problem and solve it with \papername with CPU simulator.
     \item {\bf testbed\_gpu.py. } This Python script gives an example using \papername to define the optimization problem and solve it with \papername with a GPU simulator.
\end{itemize}

\textbf{How to access: }
DOI: 10.5281/zenodo.14250941 \\
GitHub: \url{https://github.com/JanusQ/Choco-Q}

\textbf{Hardware dependencies:}
The evaluation in the paper is performed on a server with AMD EPYC 9554 64-core Processor, 1511GB memory, and 32TB storage space. Running the code requires an x86-64 machine with at least 64GB memory and 1TB storage space.

\textbf{Software dependencies:}
The code relies
on Python 3.10 or higher version. We list all required packages in \texttt{environment\_cpu.yml} of the artifact.
\subsection{Dataset}
The data used has been provided in \texttt{chocoq/problems}, which includes the problems in experiments.
\subsection{Installation}

\begin{enumerate}[leftmargin=16pt]
    \item Download the source code from the GitHub repository (\url{https://github.com/JanusQ/Choco-Q.git}) and enter into the \texttt{chocoq/} directory.
    \item Install Anaconda or Miniconda and create a virtual environment from the configuration file.
    \begin{lstlisting}
conda env create -f environment_cpu.yml
    \end{lstlisting}
    \item Activate the virtual environment and install chocoq.
    \begin{lstlisting}
conda activate choco_cpu
pip install .
    \end{lstlisting}
\end{enumerate}

If you want to install the GPU version, please replace the \texttt{cpu} in the previous commands with \texttt{gpu}.

After finishing the above installation, you can run the \texttt{testbed\_cpu.py} to check the installation of the CPU version.
\begin{lstlisting}
python testbed_cpu.py
\end{lstlisting}
To test the installation of the GPU version, run 
\begin{lstlisting}
python testbed_gpu.py
\end{lstlisting}
Please refer to the \texttt{README.md} in the artifact for a more detailed description.
\subsection{Using \papername}
Users can 1) \textit{define a constrained binary optimization problem} and 2) \textit{solve the problem with Choco-Q solver or other quantum solvers}. We also provide an example code in \texttt{testbed\_cpu.py} to illustrate the usage.
\subsection{Reproducing Experimental Results}
\begin{itemize}
    \item \textbf{Jupyter Notebooks.} The \texttt{1\_table.ipynb} in directory \texttt{implimentations/} provides the Jupyter Notebook to reproduce the experimental results in Table~\ref{tab:speed_noise_free}. You can run it to reproduce the experimental results.
    
\end{itemize}



\subsection{Methodology}
Submission, reviewing, and badging methodology:

\begin{itemize}
  \item \url{https://www.acm.org/publications/policies/artifact-review-badging}
  \item \url{http://cTuning.org/ae/submission-20201122.html}
  \item \url{http://cTuning.org/ae/reviewing-20201122.html}
\end{itemize}


\bibliographystyle{IEEEtranS}
\bibliography{reference}

\begin{thebibliography}{10}
\providecommand{\url}[1]{#1}
\csname url@samestyle\endcsname
\providecommand{\newblock}{\relax}
\providecommand{\bibinfo}[2]{#2}
\providecommand{\BIBentrySTDinterwordspacing}{\spaceskip=0pt\relax}
\providecommand{\BIBentryALTinterwordstretchfactor}{4}
\providecommand{\BIBentryALTinterwordspacing}{\spaceskip=\fontdimen2\font plus
\BIBentryALTinterwordstretchfactor\fontdimen3\font minus \fontdimen4\font\relax}
\providecommand{\BIBforeignlanguage}[2]{{%
\expandafter\ifx\csname l@#1\endcsname\relax
\typeout{** WARNING: IEEEtranS.bst: No hyphenation pattern has been}%
\typeout{** loaded for the language `#1'. Using the pattern for}%
\typeout{** the default language instead.}%
\else
\language=\csname l@#1\endcsname
\fi
#2}}
\providecommand{\BIBdecl}{\relax}
\BIBdecl

\bibitem{probabilisticunitarysynthesis}
S.~Akibue, G.~Kato, and S.~Tani, ``Probabilistic unitary synthesis with optimal accuracy,'' \emph{ACM Transactions on Quantum Computing (TQC)}, 2023.

\bibitem{qaoacompile}
M.~Alam, A.~Ash-Saki, and S.~Ghosh, ``Circuit compilation methodologies for quantum approximate optimization algorithm,'' in \emph{2020 53rd Annual IEEE/ACM International Symposium on Microarchitecture (MICRO)}, 2020, pp. 215--228.

\bibitem{qaoacompilation}
M.~Alam, A.~Ash-Saki, and S.~Ghosh, ``An efficient circuit compilation flow for quantum approximate optimization algorithm,'' in \emph{2020 57th ACM/IEEE Design Automation Conference (DAC)}, 2020, pp. 1--6.

\bibitem{frozenqubits}
R.~Ayanzadeh, N.~Alavisamani, P.~Das, and M.~Qureshi, ``Frozenqubits: Boosting fidelity of qaoa by skipping hotspot nodes,'' in \emph{Proceedings of the 28th ACM International Conference on Architectural Support for Programming Languages and Operating Systems (ASPLOS)}, 2023, pp. 311--324.

\bibitem{ayodele2022penalty}
M.~Ayodele, ``Penalty weights in qubo formulations: Permutation problems,'' in \emph{European Conference on Evolutionary Computation in Combinatorial Optimization (EvoStar)}, 2022, pp. 159--174.

\bibitem{portfoliooptimization}
S.~Benati and R.~Rizzi, ``A mixed integer linear programming formulation of the optimal mean/value-at-risk portfolio problem,'' \emph{European Journal of Operational Research}, vol. 176, no.~1, pp. 423--434, 2007.

\bibitem{hamiltoniansimulationlinear}
D.~W. Berry, A.~M. Childs, and R.~Kothari, ``Hamiltonian simulation with nearly optimal dependence on all parameters,'' in \emph{2015 IEEE 56th Annual Symposium on Foundations of Computer Science (FOCS)}, 2015, pp. 792--809.

\bibitem{gdVQA}
L.~Bittel and M.~Kliesch, ``Training variational quantum algorithms is np-hard,'' \emph{Physical Review Letters}, vol. 127, no.~12, p. 120502, 2021.

\bibitem{bocharov2015efficient}
A.~Bocharov, M.~Roetteler, and K.~M. Svore, ``Efficient synthesis of universal repeat-until-success quantum circuits,'' \emph{Physical Review Letters}, vol. 114, no.~8, p. 080502, 2015.

\bibitem{brandhofer2022benchmarking}
S.~Brandhofer, D.~Braun, V.~Dehn, G.~Hellstern, M.~H{\"u}ls, Y.~Ji, I.~Polian, A.~S. Bhatia, and T.~Wellens, ``Benchmarking the performance of portfolio optimization with qaoa,'' \emph{Quantum Information Processing}, vol.~22, no.~1, p.~25, 2022.

\bibitem{kpartition}
T.~N. Bui and B.~R. Moon, ``Genetic algorithm and graph partitioning,'' \emph{IEEE Transactions on Computers (TC)}, vol.~45, no.~7, pp. 841--855, 1996.

\bibitem{chen2022optimizing}
C.~Chen, B.~Schmitt, H.~Zhang, L.~S. Bishop, and A.~Javadi-Abhar, ``Optimizing quantum circuit synthesis for permutations using recursion,'' in \emph{Proceedings of the 59th ACM/IEEE Design Automation Conference (DAC)}, 2022, pp. 7--12.

\bibitem{dangwal2023varsaw}
S.~Dangwal, G.~S. Ravi, P.~Das, K.~N. Smith, J.~M. Baker, and F.~T. Chong, ``Varsaw: Application-tailored measurement error mitigation for variational quantum algorithms,'' in \emph{Proceedings of the 28th ACM International Conference on Architectural Support for Programming Languages and Operating Systems (ASPLOS)}, 2023, pp. 362--377.

\bibitem{eqschandheisenberg}
M.~A. de~Gosson, ``Born--jordan quantization and the equivalence of the schr{\"o}dinger and heisenberg pictures,'' \emph{Foundations of Physics}, vol.~44, no.~10, pp. 1096--1106, 2014.

\bibitem{ibmq}
\BIBentryALTinterwordspacing
I.~Q. Devices, ``ibmq quito,'' 2024. [Online]. Available: \url{https://quantum-computing.ibm.com/services/resources}
\BIBentrySTDinterwordspacing

\bibitem{politicaldistricting}
D.~Dugo{\v{s}}ija, A.~Savi{\'c}, and Z.~Maksimovi{\'c}, ``A new integer linear programming formulation for the problem of political districting,'' \emph{Annals of Operations Research}, vol. 288, pp. 247--263, 2020.

\bibitem{facilityoverview}
R.~Z. Farahani and M.~Hekmatfar, \emph{Facility location: concepts, models, algorithms and case studies}.\hskip 1em plus 0.5em minus 0.4em\relax Springer Science \& Business Media, 2009.

\bibitem{qaoaorigin}
E.~Farhi, J.~Goldstone, and S.~Gutmann, ``A quantum approximate optimization algorithm,'' \emph{arXiv preprint arXiv:1411.4028}, 2014.

\bibitem{gidneytoffoli}
\BIBentryALTinterwordspacing
C.~Gidney, ``Constructing large controlled nots,'' 2015. [Online]. Available: \url{https://algassert.com/circuits/2015/06/05/Constructing-Large-Controlled-Nots.html}
\BIBentrySTDinterwordspacing

\bibitem{groveradaptivesearch}
A.~Gilliam, S.~Woerner, and C.~Gonciulea, ``Grover adaptive search for constrained polynomial binary optimization,'' \emph{Quantum}, vol.~5, p. 428, 2021.

\bibitem{gurobi}
\BIBentryALTinterwordspacing
{Gurobi Optimization, LLC}, ``{Gurobi Optimizer Reference Manual},'' 2024. [Online]. Available: \url{https://www.gurobi.com}
\BIBentrySTDinterwordspacing

\bibitem{quantum_alternating_operator_ansatz}
S.~Hadfield, Z.~Wang, B.~O’gorman, E.~G. Rieffel, D.~Venturelli, and R.~Biswas, ``From the quantum approximate optimization algorithm to a quantum alternating operator ansatz,'' \emph{Algorithms}, vol.~12, no.~2, p.~34, 2019.

\bibitem{hadfield2017quantum}
S.~Hadfield, Z.~Wang, E.~G. Rieffel, B.~O'Gorman, D.~Venturelli, and R.~Biswas, ``Quantum approximate optimization with hard and soft constraints,'' in \emph{Proceedings of the Second International Workshop on Post Moores Era Supercomputing (PMES)}, 2017, pp. 15--21.

\bibitem{heisenbergpicture}
W.~Heisenberg, \emph{The physical principles of the quantum theory}.\hskip 1em plus 0.5em minus 0.4em\relax Courier Corporation, 1949.

\bibitem{projectscheduling}
W.~Herroelen, ``Project scheduling—theory and practice,'' \emph{Production and Operations Management}, vol.~14, no.~4, pp. 413--432, 2005.

\bibitem{graphcoloring}
T.~R. Jensen and B.~Toft, \emph{Graph coloring problems}.\hskip 1em plus 0.5em minus 0.4em\relax John Wiley \& Sons, 2011.

\bibitem{isingmodel}
T.~Kadowaki and H.~Nishimori, ``Quantum annealing in the transverse ising model,'' \emph{Physical Review E}, vol.~58, no.~5, p. 5355, 1998.

\bibitem{HEA}
A.~Kandala, A.~Mezzacapo, K.~Temme, M.~Takita, M.~Brink, J.~M. Chow, and J.~M. Gambetta, ``Hardware-efficient variational quantum eigensolver for small molecules and quantum magnets,'' \emph{Nature}, vol. 549, no. 7671, pp. 242--246, 2017.

\bibitem{karp2010reducibility}
R.~M. Karp, \emph{Reducibility among combinatorial problems}.\hskip 1em plus 0.5em minus 0.4em\relax Springer, 2010.

\bibitem{boundbinaryprogramming}
D.~Knop, M.~Pilipczuk, and M.~Wrochna, ``Tight complexity lower bounds for integer linear programming with few constraints,'' \emph{ACM Transactions on Computation Theory (TCT)}, vol.~12, no.~3, pp. 1--19, 2020.

\bibitem{lowquantumannealing}
C.~R. Laumann, R.~Moessner, A.~Scardicchio, and S.~L. Sondhi, ``Quantum annealing: The fastest route to quantum computation?'' \emph{The European Physical Journal Special Topics}, vol. 224, no.~1, pp. 75--88, 2015.

\bibitem{constructHd}
H.~Leipold and F.~M. Spedalieri, ``Constructing driver hamiltonians for optimization problems with linear constraints,'' \emph{Quantum Science and Technology}, vol.~7, no.~1, p. 015013, 2021.

\bibitem{li2019tackling}
G.~Li, Y.~Ding, and Y.~Xie, ``Tackling the qubit mapping problem for nisq-era quantum devices,'' in \emph{Proceedings of the 24th ACM International Conference on Architectural Support for Programming Languages and Operating Systems (ASPLOS)}, 2019, pp. 1001--1014.

\bibitem{long2001grover}
G.-L. Long, ``Grover algorithm with zero theoretical failure rate,'' \emph{Physical Review A}, vol.~64, no.~2, p. 022307, 2001.

\bibitem{hamiltoniansimulationbyqsp}
G.~H. Low and I.~L. Chuang, ``Optimal hamiltonian simulation by quantum signal processing,'' \emph{Physical Review Letters}, vol. 118, no.~1, p. 010501, 2017.

\bibitem{trotterbasedsimulation}
D.~Mac~Kernan, G.~Ciccotti, and R.~Kapral, ``Trotter-based simulation of quantum-classical dynamics,'' \emph{The Journal of Physical Chemistry B}, vol. 112, no.~2, pp. 424--432, 2008.

\bibitem{melo2009facilitylocation}
M.~T. Melo, S.~Nickel, and F.~Saldanha-Da-Gama, ``Facility location and supply chain management--a review,'' \emph{European Journal of Operational Research}, vol. 196, no.~2, pp. 401--412, 2009.

\bibitem{energyopt}
A.~Omu, R.~Choudhary, and A.~Boies, ``Distributed energy resource system optimisation using mixed integer linear programming,'' \emph{Energy Policy}, vol.~61, pp. 249--266, 2013.

\bibitem{cobyla}
M.~J. Powell, \emph{A direct search optimization method that models the objective and constraint functions by linear interpolation}.\hskip 1em plus 0.5em minus 0.4em\relax Springer, 1994.

\bibitem{quantumannealing}
G.~E. Santoro and E.~Tosatti, ``Optimization using quantum mechanics: quantum annealing through adiabatic evolution,'' \emph{Journal of Physics A: Mathematical and General}, vol.~39, no.~36, p. R393, 2006.

\bibitem{stein2022eqc}
S.~Stein, N.~Wiebe, Y.~Ding, P.~Bo, K.~Kowalski, N.~Baker, J.~Ang, and A.~Li, ``Eqc: ensembled quantum computing for variational quantum algorithms,'' in \emph{Proceedings of the 49th Annual International Symposium on Computer Architecture (ISCA)}, 2022, pp. 59--71.

\bibitem{streif2020training}
M.~Streif and M.~Leib, ``Training the quantum approximate optimization algorithm without access to a quantum processing unit,'' \emph{Quantum Science and Technology}, vol.~5, no.~3, p. 034008, 2020.

\bibitem{tan2023quct}
S.~Tan, C.~Lang, L.~Xiang, S.~Wang, X.~Jia, Z.~Tan, T.~Li, J.~Yin, Y.~Shang, A.~Python, L.~Lu, and J.~Yin, ``Quct: A framework for analyzing quantum circuit by extracting contextual and topological features,'' in \emph{Proceedings of the 56th Annual IEEE/ACM International Symposium on Microarchitecture (MICRO)}, 2023, pp. 494--508.

\bibitem{verma2022penalty}
A.~Verma and M.~Lewis, ``Penalty and partitioning techniques to improve performance of qubo solvers,'' \emph{Discrete Optimization}, vol.~44, p. 100594, 2022.

\bibitem{redqaoa}
M.~Wang, B.~Fang, A.~Li, and P.~J. Nair, ``Red-qaoa: Efficient variational optimization through circuit reduction,'' in \emph{Proceedings of the 29th ACM International Conference on Architectural Support for Programming Languages and Operating Systems (ASPLOS)}, 2024, pp. 980--998.

\bibitem{xu2023synthesizing}
A.~Xu, A.~Molavi, L.~Pick, S.~Tannu, and A.~Albarghouthi, ``Synthesizing quantum-circuit optimizers,'' \emph{Proceedings of the ACM on Programming Languages (PLDI)}, pp. 835--859, 2023.

\bibitem{cyclicdriver}
T.~Yoshioka, K.~Sasada, Y.~Nakano, and K.~Fujii, ``Experimental demonstration of fermionic qaoa with one-dimensional cyclic driver hamiltonian,'' in \emph{2023 IEEE International Conference on Quantum Computing and Engineering (QCE)}, vol.~1, 2023, pp. 300--306.

\end{thebibliography}

\end{document}